\newtheorem{thm}{Theorem}
\newtheorem{cor}{Corollary}
\newtheorem{prop}{Proposition}
\newtheorem{rem}{Remark}
\newtheorem{defn}{Definition}
\DeclareMathOperator{\dist}{dist}
\DeclareMathOperator{\Reach}{Reach}
\DeclareMathOperator{\Disc}{Disc}
\DeclareMathOperator{\Viab}{Viab}
\DeclareMathOperator{\Dom}{Dom}
\DeclareMathOperator{\Init}{Init}
\newcommand{\inter}[1]{\accentset{\circ}{#1}}
\DeclareMathOperator{\cl}{cl}
\newcommand{\Y}{\mathcal{Y}}
\newcommand{\V}{\mathcal{V}}
\newcommand{\Vl}{\mathscr{V}}
\newcommand{\U}{\mathcal{U}}
\newcommand{\Ul}{\mathscr{U}}
\newcommand{\Z}{\mathcal{Z}}
\newcommand{\R}{\mathcal{R}}
\newcommand{\X}{\mathcal{X}}
\newcommand{\K}{\mathcal{K}}
\newcommand{\E}{\mathcal{E}}
\newcommand{\A}{\mathcal{A}}
\newcommand{\B}{\mathcal{B}}
\newcommand{\C}{\mathcal{C}}
\newcommand{\I}{\mathcal{I}}
\newcommand{\M}{\mathcal{M}}
\newcommand{\Q}{\mathcal{Q}}
\newcommand{\tr}[1]{#1^\textsl{T}}
\newcommand{\Real}{\mathbb{R}}
\newcommand{\abs}[1]{\lvert#1\rvert}
\newcommand{\norm}[1]{\left\lVert#1\right\rVert}
\newcommand{\inner}[1]{\langle#1\rangle}
\newcommand{\Inner}[1]{\big\langle#1\big\rangle}
\newcommand{\p}{\phantom{+}}
\newcommand{\ellpfunc}{\Circle}
\newcommand{\NA}{{\textup{na}}}
\newcommand{\CL}{{\textup{fb}}}
\newcommand{\InpIn}{\I_\text{i}}
\newcommand{\InpEx}{\I_\text{e}}
\begin{document}

\title{
Scalable Safety-Preserving Robust Control Synthesis for Continuous-Time Linear Systems\thanks{Research supported by NSERC Discovery Grants \#327387 (Oishi) and \#298211 (Mitchell), NSERC Collaborative Health Research Project \#CHRPJ-350866-08 (Dumont), and the Institute for Computing, Information and Cognitive Systems (ICICS) at UBC. This work was mainly carried out at Electrical \& Computer Engineering, University of British Columbia, Vancouver, BC, Canada.}}

\author{Shahab Kaynama\thanks{S.\ Kaynama ({\tt\small kaynama@eecs.berkeley.edu}, cor.\ author) is with Electrical Engineering \& Computer Sciences, University of California at Berkeley, 337 Cory Hall, Berkeley, CA 94720, USA.} , Ian M.\ Mitchell\thanks{I.\ Mitchell ({\tt\small mitchell@cs.ubc.ca}) is with Computer Science, University of British Columbia, 2366 Main Mall, Vancouver, BC V6T1Z4, Canada.} , Meeko Oishi\thanks{M.\ Oishi ({\tt\small oishi@unm.edu}) is with Electrical \& Computer Engineering, University of New Mexico, MSC01 1100, 1 University of New Mexico, Albuquerque, NM 87131, USA.} , Guy A.\ Dumont\thanks{G.\ Dumont ({\tt\small guyd@ece.ubc.ca}) is with Electrical \& Computer Engineering, University of British Columbia, 2332 Main Mall, Vancouver, BC V6T1Z4, Canada.}}

\date{(Preprint Submitted for Publication)}

\maketitle

\begin{abstract}
    We present a \emph{scalable} set-valued safety-preserving controller for constrained continuous-time linear time-invariant (LTI) systems subject to additive, unknown but bounded disturbance or uncertainty. The approach relies upon a conservative approximation of the discriminating kernel using robust maximal reachable sets---an extension of our earlier work on computation of the viability kernel for high-dimensional systems. Based on ellipsoidal techniques for reachability, a piecewise ellipsoidal algorithm with polynomial complexity is described that under-approximates the discriminating kernel under LTI dynamics. This precomputed piecewise ellipsoidal set is then used online to synthesize a permissive state-feedback safety-preserving controller. The controller is modeled as a hybrid automaton and can be formulated such that under certain conditions the resulting control signal is continuous across its transitions. We show the performance of the controller on a twelve-dimensional flight envelope protection problem for a quadrotor with actuation saturation and unknown wind disturbances.
\end{abstract}

\section{Introduction} \label{S:Intro}


Hard constraints on inputs and states appear in many cyberphysical systems due to safety requirements and actuator limitations. The ability of a system to operate successfully under such constraints becomes crucially important in the presence of adversarial disturbances and uncertainties. Therefore, designing permissive feedback controllers that can guarantee the system's satisfaction of all constraints whilst fulfilling given performance criteria is highly desirable.

The subset of state space for which safety-preserving controllers exist is referred to as the \emph{discriminating kernel}, or in the absence of a disturbance input, the \emph{viability kernel} \cite{aubin2011,cardaliaguet1999set}.\footnote{When constraint satisfaction is considered over an infinite time horizon, the viability (discriminating) kernel is equivalent to the \emph{largest (robust) controlled-invariant set} \cite{blanchini2008set}.} The control policies associated with these kernels are capable of keeping the trajectories of the system within the safe region of the state space. Their synthesis has therefore received significant attention among researchers \cite{aubin2011,blanchini2008set,Asarin_Bournez_Dang_Maler_Pnueli_2000}. Application domains in which safety must be maintained despite bounded inputs include control of depth of anesthesia \cite{kaynama_thesis}, aircraft envelope protection \cite{Margellos_Lygeros_2009,Lygeros1999,Tomlin2000a,TMBO03}, autolanders \cite{BMOT07}, collision avoidance \cite{MBT05,Del_Vecchio_Malisoff_Verma_2009,Hafner_Cunningham_Caminiti_Del_Vecchio_2011, Broucke_2006,Oishi_2001}, automated highway systems \cite{Lygeros_Godbole_Sastry_1998}, control of under-actuated underwater vehicles \cite{Panagou_2013}, stockout prevention of constrained storage systems in manufacturing processes \cite{Borrelli_Del_Vecchio_Parisio_2009}, management of a marine renewable resource \cite{Bene2001}, and safety in semi-autonomous vehicles \cite{Borrelli2012ACC}.


A classical approach to synthesizing safety-preserving controllers is to use the information of the shape of the computed kernel; as per Nagumo's theorem and its generalizations \cite{aubin1991viability}, the optimal control laws are chosen based on the contingent cone \cite{Bouligand32} or the proximal normal \cite{aubin1991viability,cardaliaguet1999set} at every point on the boundary of the set. Similarly, a given viability/discriminating kernel can be used as the terminal constraint set to guarantee recursive feasibility of the receding horizon optimization problem in a model predictive control (MPC) framework \cite{Blanchini_1999, Mayne_Rawlings_Rao_Scokaert_2000,Kerrigan_2000,Bemporad_Borrelli_Morari_2002, Rakovic_Mayne_2005,Rakovic_2009}. This results in constraint satisfaction by the closed-loop system and therefore the generated control laws are safety-preserving. In both cases discussed above the computational complexity of synthesizing such control laws is at best equal to that of computing the corresponding kernel.

Eulerian methods \cite{cardaliaguet1999set, MBT05, gao2006-reachability} are capable of computing the discriminating kernel and their associated safety-preserving control laws. However, these methods rely on gridding the state space and therefore suffer from a computational complexity that is exponential in the dimension of the state. Although versatile in their ability to handle various types of dynamics and constraints, the applicability of Eulerian methods is limited to systems of low dimensionality---usually less than four. Within the MPC framework, in the case of discrete-time LTI systems with polytopic constraints, the discriminating/viability kernel (which is also a polytope) is computed using algorithms such as \cite{KGBM04}. However, the complexity of this computation is also exponential and thus restricted to low-dimensional systems.\footnote{The number of vertices and facets of the polytope increases rapidly with each successive Minkowski sum and intersection operation. Additionally, the various vertex-facet enumerations involved are known to be NP-hard.} For the more general case of continuous-time systems, a common practice is to crudely compute a single ellipsoid (usually a control Lyapunov function \cite{blanchini2008set}) that is a subset of the discriminating kernel and use this ellipsoid as the terminal constraint set. If such a set can be found (and this in itself may be a big assumption), then with a linear cost functional the receding horizon optimization problem becomes a quadratic programme. However, the performance of such controller can be extremely conservative.

In this paper, we will first extend our earlier results \cite{kaynama_HSCC2012} on scalable computation of the viability kernel to the case of systems with adversarial inputs. We will begin by representing the discriminating kernel of a generic nonlinear system in terms of robust maximal reachable sets. Using this representation, we will describe a piecewise ellipsoidal algorithm for LTI systems that can be employed to compute a conservative approximation (i.e.\ a guaranteed under-approximation) of the discriminating kernel. Then, we will use the computed approximation to set up a \emph{scalable} robust safety-preserving controller for LTI systems based on the optimal control strategies described in \cite{kurzhanskii96ellipsoidal}. We formulate the controller as a hybrid automaton with two modes: safety and performance. The domains of the automaton are time-varying piecewise ellipsoidal sets that are precomputed offline during the analysis phase (i.e.\ when under-approximating the discriminating kernel). When online, in the performance mode, any admissible control input can be applied. For example, one or multiple performance objectives can be optimized (or even a model-free reinforcement learning controller can be used) without the need to care for safety or feasibility. The automaton transitions to the safety mode on the boundary of the piecewise ellipsoidal sets when safety is at stake. In this mode, if specific safety-preserving values of the control input are applied the safety of the closed-loop system is maintained. Since the most demanding aspect of this approach is the offline piecewise ellipsoidal approximation of the discriminating kernel, the complexity of the technique is polynomial (roughly cubic) in the dimension of the state. The online computations are of quadratic complexity but are inexpensive to perform (mainly matrix multiplications) if the chosen performance controller is of equal or less complexity.

The paper is organized as follows: In Section~\ref{S:prob_formulation} we formulate the problem and lay out necessary preliminaries. Section~\ref{S:approximation_control_phil} describes the approximation algorithm and the main ideas behind our scalable safety-preserving control strategy. These results are expressed in terms of a general nonlinear system. Section~\ref{S:analysis_algorithm} describes the offline phase which entails a piecewise ellipsoidal approximation of the discriminating kernel for LTI systems. Section~\ref{S:synthesis_algorithm} employs these precomputed sets to formulate the safety-preserving feedback controller. Discussions on infinite-horizon control and handling of chattering and Zeno executions are also provided. A twelve-dimensional example is considered in Section~\ref{S:Application}, and conclusions and future work are discussed in Section~\ref{S:conc}.

\subsection{Other Related Work}

Synthesis of safety-preserving controllers can be viewed as a search for an appropriate control Lyapunov function (albeit, subject to additional constraints and possibly a stabilizability assumption) and its associated control laws. To this end, an alternative to the aforementioned classical approach is related to techniques based on sum-of-squares (SOS) optimization for polynomial systems with semi-algebraic constraints; cf.\ ~\cite{tedrake2010lqr, tan_CLF_04, Rantzer_Prajna_2004}.\footnote{The synthesis technique in ~\cite{Rantzer_Prajna_2004} relies on state space gridding which renders it intractable in high dimensions.} The resulting bilinear program is solved either by alternating search or via convex relaxations. The degree of the SOS multipliers is commonly kept low (e.g.\ quartic), striking a tradeoff between excessive conservatism and computational complexity. A related technique is the recent method of occupation measures~\cite{henrion2012convex, majumdar2013technical} that, while convex and thus scalable, can only compute an over-approximation of the desired set (insufficient for safety). Thus, the resulting controllers are not guaranteed to be safety-preserving.


A classification technique based on support vector machines (SVMs) is presented in \cite{Deffuant_Chapel_Martin_2007} that approximates the viability kernel and yields an analytical expression of its boundary. A sequential minimal optimization algorithm is solved to compute the SVM that in turn forms a barrier function on or close to the boundary of the viability kernel in the discretized state space. While the method successfully reduces the computational time for the synthesis of control laws when the dimension of the input space is high, its applicability to systems with high-dimensional \emph{state space} is limited. The method does not provide any guarantees that the synthesized control laws are safety-preserving.




The notion of approximate bisimulation \cite{GP07_automatica} can be used to construct a discrete abstraction of the continuous state space such that the observed behavior of the corresponding abstract system is ``close'' to that of the original continuous system. Girard \emph{et al.} in a series of papers \cite{Girard_2010,Camara_Girard_Gossler_2011,Girard_2012} use this notion to construct safety-preserving controllers for approximately bisimilar discrete abstractions and prove that the controller is \emph{correct-by-design} for the original systems. The technique is applied to incrementally stable switched systems (for which approximately bisimilar discrete abstractions of arbitrary precision can be constructed) with autonomous or affine dynamics, and safety-preserving switched controllers are synthesized. The abstraction, however, relies on sampling of time and space (i.e.\ gridding) and therefore its applicability appears to be limited to low-dimensional systems---even when multi-scale abstraction (adaptive gridding) techniques are employed.

\subsection{Summary of Contributions}

We describe a novel iterative technique that expresses the discriminating kernel under general nonlinear dynamics in terms of robust maximal reachable sets. By doing so, we enable the use of efficient and scalable Lagrangian algorithms for maximal reachability for the computation of the discriminating kernel---a computation that has historically been a bottleneck for safety analysis and synthesis of high-dimensional systems.

Focusing on LTI dynamics, we then make use of one such Lagrangian algorithm, the ellipsoidal techniques for maximal reachability \cite{Kurzhanski2000a}, to 1) conservatively/correctly approximate the discriminating kernel over a given time horizon and do so in a scalable fashion, and 2) formulate a novel hybrid controller (based on the synthesis of the maximal reachability control laws) that maintains safety over \emph{at least} the given horizon by introducing the notion of pseudo-time as a continuous variable in the hybrid model. We present a sufficient condition and a variant of the hybrid controller that enable \emph{infinite-horizon} safety-preserving control based on finite-horizon analysis.  Moreover, we describe a technique that ensures that the resulting permissive safety-preserving control law is continuous across the transitions of the hybrid controller (which in turn can prevent chattering, a common problem in hybrid control of continuous-time systems).

\section{Problem Formulation} \label{S:prob_formulation}

Consider the continuous-time system
\begin{equation}\label{E:nonlinear_ss_eqn}
    \dot{x} = f(x,u,v), \quad x(0)=x_0
\end{equation}
with state space $\X:=\Real^n$ (a finite-dimensional vector space) and tangent space $T\X$, state vector $x(t)\in\X$, control input $u(t)\in\U$, and disturbance input $v(t)\in\V$, where $\U$ and $\V$ are, respectively, compact convex subsets of $\Real^{m_u}$ and $\Real^{m_v}$. The vector field $f\colon \X \times \U \times \V \to T\X$ is assumed to be Lipschitz in $x$ and continuous in both $u$ and $v$. Let $\Ul_{[0,t]}:=\{u\colon [0,t] \to \Real^{m_u} \;\text{measurable}, \; u(s)\in \U \; \text{a.e.} \; s\in [0,t] \}$ and $\Vl_{[0,t]}:=\{v\colon [0,t] \to \Real^{m_v} \;\text{measurable}, \; v(s)\in \V \; \text{a.e.}\;  s\in [0,t] \}$ be the sets of input signals. With an arbitrary, finite time horizon $\tau>0$, for every $t\in [0,\tau]$, $x_0\in \X$, $u(\cdot)\in \Ul_{[0,t]}$, and $v(\cdot)\in \Vl_{[0,t]}$, there exists a unique trajectory $x_{x_0}^{u,v}\colon [0,t] \to \X$ that satisfies \eqref{E:nonlinear_ss_eqn} and the initial condition $x_{x_0}^{u,v}(0)=x_0$. When clear from the context, we shall drop the subscript and superscript from the trajectory notation.

We assume that the disturbance input $v$ is unknown but takes values in the set $\V$. This set can also be used to capture any (unknown but bounded) uncertainties or unmodeled nonlinearities in the model.

Because we are seeking optimal input signals (functions of time), we
must address the question of what information is available to each
player when their input value is chosen.  From an implementation
perspective the most natural choice is feedback control, in which each
player knows the current state of the system when choosing their input
value at a given time.  Mathematically, however, in a two player game
there is also the question of the order in which the players declare
their input choices, where the second player may gain an advantage
from knowledge of the first player's choice.  In a discrete time
setting this ordering is easily handled by the order in which the
inputs are quantified when defining the discriminating kernel or
reachable set: the input quantified first can be chosen based on
knowledge of the current state (a feedback policy), while the input
quantified second knows both the current state and the value of the
other input signal.  In our context we wish to conservatively estimate
the discriminating kernel, so we will give any advantage to the
disturbance input; consequently, in the discrete time case the
disturbance input would be quantified second.

Unfortunately, the continuous time setting is more complicated.
Because the control knows the state at all times through feedback, the
disturbance's choice of input (or at least any effect it has on the
evolution of the system) is immediately revealed; consequently, the
control is able to detect and respond to disturbance choices
arbitrarily fast even if it is chosen ``first''.  Mathematically, this
situation is dealt with through the construct of non-anticipative
strategies for the disturbance~\cite{Evans_Souganidis_1984}: A map
$\varrho \colon \Ul^\CL_{[0,t]} \to \Vl_{[0,t]}$ is non-anticipative
for $v$ if for every $u(\cdot), u'(\cdot) \in \Ul^\CL_{[0,t]}$, $u(s)
= u'(s)$ implies $\varrho[u](s) = \varrho[u'](s)$ a.e.\ $s \in [0,t]$.
To avoid notational clutter, we denote the class of non-anticipative
disturbance signals $v(\cdot)=\varrho[u](\cdot)$ over $[0,t]$ by
$\Vl^\NA_{[0,t]}$.  Consider a compact state constraint set $\K
\subset \X$ with nonempty interior. We seek to synthesize feedback
control policies that keep system \eqref{E:nonlinear_ss_eqn} contained
in $\K$ over the entire time interval $[0,\tau]$.

\begin{defn}[Safety-Preserving Control]
      With $\K$ deemed ``safe'', for $x_0 \in \K$, a control law $u(\cdot)\in \Ul^\CL_{[0,\tau]}$ is safety-preserving over $[0,\tau]$ if for every disturbance $v(\cdot)\in \Vl^\NA_{[0,\tau]}$ the closed-loop trajectory emanating from $x_0$ remains in $\K$ for all time $t\in [0,\tau]$. 
\end{defn}
The subset of states in $\K$ for which such control policies exist is
defined next.  The fact that the control can respond arbitrarily fast
to the disturbance's choice is taken into account by requiring the
disturbance reveal its non-anticipative strategy mapping before the
control chooses its input.
\begin{defn}[Discriminating Kernel] \label{D:disc}
    The finite-horizon discriminating kernel of $\K$ is the set of all initial states in $\K$ for which there exists a safety-preserving control: 
    \begin{equation}\label{E:disc_setbuilder}
        \Disc_{[0,\tau]}(\K,\U,\V):=\bigl\{ x_0 \in \K \mid  \forall v(\cdot) \in \Vl^\NA_{[0,\tau]}, \, \exists  u(\cdot)\in \Ul^\CL_{[0,\tau]}, \, \forall  t\in [0,\tau], \, x_{x_0}^{u,v}(t) \in \K \bigr\}.
    \end{equation}
\end{defn}
While the control chooses its input with knowledge of the
disturbance's response because the disturbance's strategy is
quantified first in~\eqref{E:disc_setbuilder}, it can be shown that the
disturbance still holds a slight
advantage~\cite{Evans_Souganidis_1984}.  Note that this order of
quantification of the control and disturbance inputs agrees with the
reachable set construct used in~\cite{MBT05}. 
For further discussion of the suitability of this information pattern
to reachability and viability, see~\cite{cardaliaguet1999set,MBT05}
and the citations therein.




With $\V=\{0\}$ the discriminating kernel reduces to the (finite horizon) \emph{viability kernel} under the dynamics $\dot{x}=f(x,u)$:
    \begin{equation}\label{E:viab_setbuilder}
        \Viab_{[0,\tau]}(\K,\U):= \Disc_{[0,\tau]}(\K,\U,\{0\}) = \bigl\{ x_0 \in \K \mid  \exists  u(\cdot)\in \Ul_{[0,\tau]}, \,  \forall  t\in [0,\tau], \, x_{x_0}^{u}(t) \in \K \bigr\}.
    \end{equation}


The discriminating kernel as well as the corresponding safety-preserving control strategies have historically been computed using versatile and powerful Eulerian methods such as ~\cite{cardaliaguet1999set, MBT05, gao2006-reachability}. However, the computational complexity of these algorithms is exponential in the number of states. Their applicability, therefore, is limited to systems of low dimensionality (usually less than four). In this paper we wish to synthesize safety-preserving control laws in a more scalable fashion. First, we will present an alternative characterization of the discriminating kernel by extending our viability kernel results from ~\cite{kaynama_HSCC2012}, which will allow for an efficient computation of this set and ultimately the corresponding control laws.

\subsection{Preliminaries}

We say that the vector field $f$ is bounded on $\K$ if there exists a real number $M > 0$ such that $\norm{f(x,u,v)} \leq M$ $\forall (x,u,v) \in \K \times \U \times \V$ for some norm $\norm{\cdot}$. If $\K$ is compact, every continuous vector field $f$ is bounded on $\K$.

A partition $P = \{t_0, t_1, \ldots, t_n\}$ of $[0,\tau]$ is a set of distinct points $t_0, t_1, \ldots, t_n \in [0,\tau]$ with $t_0 = 0$, $t_n = \tau$ and the ordering $t_0 < t_1 < \cdots < t_n$. We denote the number $n$ of intervals $[t_{k-1},t_k]$ in $P$ by $\abs{P}$, the size of the largest interval by $\norm{P} := \max_{k=1}^{\abs{P}} \{t_{k+1}-t_{k}\}$, and the set of all partitions of $[0,\tau]$ by $\mathscr{P}([0,\tau])$.

For a signal $u \colon [0,\tau] \to \U$ and a partition $P = \{t_0, \ldots, t_n\}$ of $[0,\tau]$, we define the tokenization of $u$ corresponding to $P$ as the set of functions $\left\{u_k \colon [0,t_{k}-t_{k-1}] \to \U\right\}$ such that $u_k(t) = u(t+t_{k-1})$. Conversely, for a set of functions $\left\{u_k \colon [0,t_{k}-t_{k-1}] \to \U\right\}$, we define their concatenation $u \colon [0,\tau] \to \U$ as $u(t) = u_{k}(t-t_{k-1})$ if $t\neq 0$ and $u(0)=u_1(0)$, where $k$ is the unique integer such that $t \in (t_{k-1},t_k]$.



The $\norm{\cdot}$-distance of a point $x \in \X$ from a nonempty set $\A \subset \X$ is defined as
\begin{equation}
    \dist(x,\A) := \inf_{ a \in \A}\norm{x-a}.
\end{equation}
For a fixed set $\A$, the map $x \mapsto \dist(x,\A)$ is continuous.

\subsubsection{Additional Notation}
The inner product in a Hilbert space $\X$ is denoted by $\inner{\cdot,\cdot}$. The \emph{Minkowski sum} of any two nonempty subsets $\Y$ and $\Z$ of $\X$ is $\Y \oplus \Z := \{ y+z \mid y \in \Y,\, z\in \Z \}$; their \emph{Pontryagin difference} (or, the \emph{erosion} of $\Y$ by $\Z$) is $\Y \ominus \Z := \{y \mid y \oplus \Z \subseteq \Y \}$. We denote by $\inter{\Y}$ the interior of $\Y$, by $\partial \Y$ its boundary, and by $\cl \Y:=\inter{\Y}\cup \partial \Y$ its closure. The set $\B(\delta):= \{x \mid \norm{x}\leq \delta\}$ denotes the closed $\norm{\cdot}$ norm-ball of radius $\delta \in \Real^+$ about the origin. The sets $2^\Y$ and $\Y^c$ denote the power set and the complement of $\Y$ in $\X$, respectively.

\section{Abstract Algorithms for Safety Analysis \& Control Synthesis}\label{S:approximation_control_phil}

We first show that we can approximate $\Disc_{[0,\tau]}(\K,\U,\V)$ by considering a nested sequence of sets that are robustly reachable in small sub-intervals of $[0,\tau]$. As compared to the viability kernel results in \cite{kaynama_HSCC2012} the generated under-approximation is robust with respect to an adversarial disturbance input.  We then use this approximation to infer the associated safety-preserving control laws.


\begin{defn}[Robust Maximal Reachable Set]\label{D:robust_max_reach_set}
    The robust (backward) maximal reachable set at time $t$ is the set of initial states for which for every non-anticipative disturbance there exists a feedback control such that the trajectories emanating from those states reach $\K$ exactly at $t$:
    \begin{equation}
            \Reach^{\sharp}_t(\K,\U,\V):= \{ x_0 \in \X \mid \forall v(\cdot) \in \Vl^\NA_{[0,t]}, \, \exists u(\cdot) \in \Ul^\CL_{[0,t]},\,  x_{x_0}^{u,v}(t) \in \K \}.
    \end{equation}
\end{defn}

Again, the justification as to the order of quantification of the two inputs in this \emph{continuous-time} framework follows that of the discriminating kernel discussed immediately after Definition~\ref{D:disc}.

\subsection{Safety Analysis: Under-Approximating the Discriminating Kernel}

Given $P\in \mathscr{P}([0,\tau])$, consider the following recursion 
\begin{subequations}\label{E:recursion_CT_Disc}
    \begin{align}
        K_{\abs{P}} (P) &= \K_\downarrow(P),\label{E:recursion_CT_sub1}\\
        K_{k-1} (P) &= \K_\downarrow(P) \cap \Reach^\sharp_{t_{k} - t_{k-1}} (K_k (P),\U,\V) \quad \text{for} \quad k \in \{1, \ldots, \abs{P}\},
    \end{align}
\end{subequations}
where $\K_\downarrow(P) := \{x \in \K \mid \dist(x, \K^c) \geq M \norm{P}\}$ is a subset of $\K$ deliberately chosen at a distance $M \norm{P}$ from its boundary.

\begin{prop}\label{P:continuous_subset_Disc}
    Suppose that the vector field $f$ is bounded on $\K$ by $M$. Then for any partition $P\in \mathscr{P}([0,\tau])$ the final set $K_0(P)$ defined by the recursion \eqref{E:recursion_CT_Disc} satisfies
    \begin{equation}\label{E:disc[0,tau]}
        K_{0}(P) \subseteq \Disc_{[0,\tau]}(\K,\U,\V).
    \end{equation}
\end{prop}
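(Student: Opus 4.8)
The plan is to show that every $x_0 \in K_0(P)$ admits a safety-preserving control against an arbitrary non-anticipative disturbance, which is precisely membership in $\Disc_{[0,\tau]}(\K,\U,\V)$. First I would unwind the recursion to isolate the two facts I actually use: for every $k \in \{1,\ldots,\abs{P}\}$ we have $K_{k-1}(P) \subseteq \Reach^\sharp_{t_k - t_{k-1}}(K_k(P),\U,\V)$ and $K_{k-1}(P) \subseteq \K_\downarrow(P)$, while $K_{\abs{P}}(P) = \K_\downarrow(P)$; in particular $K_k(P) \subseteq \K_\downarrow(P)$ for all $k$. The argument then splits into a \emph{constructive} part, which forces the closed-loop trajectory through the nested sets $K_k(P)$ at the partition times, and an \emph{interpolation} part, which shows the trajectory cannot leave $\K$ strictly between those times.

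For the construction I fix $x_0 \in K_0(P)$ together with an arbitrary non-anticipative strategy $\varrho$ for the disturbance on $[0,\tau]$, and I build the control token-by-token over the subintervals $[t_{k-1},t_k]$. At stage $k$, assuming the control already chosen on $[0,t_{k-1}]$ has driven the state to some $x(t_{k-1}) \in K_{k-1}(P)$ (the base case being $x(t_0)=x_0 \in K_0(P)$), I would define a restricted, time-shifted strategy $\varrho^{(k)}$ on $[0,t_k-t_{k-1}]$ by feeding $\varrho$ the fixed past control concatenated with the candidate current token. Non-anticipativity of $\varrho$ makes this well defined (the response on $[0,t_k]$ depends only on the control on $[0,t_k]$) and shows $\varrho^{(k)}$ is itself non-anticipative. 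Applying the inclusion $K_{k-1}(P) \subseteq \Reach^\sharp_{t_k-t_{k-1}}(K_k(P),\U,\V)$ to $\varrho^{(k)}$ yields a control token $u_k$ steering $x(t_{k-1})$ into $K_k(P)$ at time $t_k$. Concatenating $u_1,\ldots,u_{\abs{P}}$ produces a single $u(\cdot) \in \Ul^\CL_{[0,\tau]}$ whose closed-loop trajectory under $v(\cdot)=\varrho[u](\cdot)$ satisfies $x(t_k) \in K_k(P)$ for every $k$, since the disturbance signals generated stagewise agree with $\varrho[u]$ by non-anticipativity and the pieced-together trajectory is therefore the genuine one.

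The remaining step upgrades ``safe at the grid points'' to ``safe everywhere''. For $t \in [t_{k-1},t_k]$ I would use boundedness of $f$ to bound the displacement, $\norm{x(t)-x(t_{k-1})} = \norm{\int_{t_{k-1}}^{t} f(x(s),u(s),v(s))\,ds} \leq M(t-t_{k-1}) \leq M\norm{P}$. Since $x(t_{k-1}) \in K_{k-1}(P) \subseteq \K_\downarrow(P)$ gives $\dist(x(t_{k-1}),\K^c) \geq M\norm{P}$, the closed ball of radius $M\norm{P}$ about $x(t_{k-1})$ lies in $\cl \K = \K$ (using that $\K$ is compact, hence closed), so $x(t) \in \K$. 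As $k$ and $t$ were arbitrary, $x(t) \in \K$ for all $t \in [0,\tau]$, and since $\varrho$ was arbitrary this establishes $x_0 \in \Disc_{[0,\tau]}(\K,\U,\V)$.

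I expect the main obstacle to be the non-anticipativity bookkeeping in the constructive part: one must define the stagewise strategies $\varrho^{(k)}$ so that each is legitimately non-anticipative and, crucially, so that concatenating the stagewise controls reproduces the same disturbance realization as the global strategy $\varrho$ applied to the final concatenated control. Getting this consistency right—so that the trajectory assembled from the per-interval reachability arguments coincides with the true closed-loop trajectory of $(u,\varrho[u])$—is where the care lies; by contrast, the displacement/erosion estimate is routine once the velocity bound $M$ and the definition of $\K_\downarrow(P)$ are in hand.
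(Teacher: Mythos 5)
Your proposal is correct and takes essentially the same route as the paper's proof: extract stagewise control tokens from the inclusions $K_{k-1}(P) \subseteq \Reach^\sharp_{t_k-t_{k-1}}(K_k(P),\U,\V)$, concatenate them so the closed-loop trajectory hits $K_k(P)$ at every grid point, and then use the velocity bound $M$ together with the erosion $\K_\downarrow(P)$ (distance at least $M\norm{P}$ from $\K^c$) to rule out escape between grid points. The only difference is that you make explicit the restriction of the global non-anticipative strategy $\varrho$ to the time-shifted stagewise strategies $\varrho^{(k)}$ and the consistency of the assembled disturbance realization --- bookkeeping the paper compresses into ``taking the concatenation of the inputs $u_k$ and $v_k$'' --- which is a tightening of the same argument rather than a different one.
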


\begin{proof}
    The proof is provided in the Appendix.
\end{proof}

\begin{cor}[Intermediate Discriminating Kernels]
    For any partition $P$ and every $k\in \{1,\dots,\abs{P}+1\}$ the sets $K_{k-1}(P)$ defined by \eqref{E:recursion_CT_Disc} satisfy
    \begin{equation}\label{E:disc[0,tau-t]}
    \begin{split}
        K_{k-1}(P) \subseteq \Disc_{[0,\tau-t_{k-1}]}(\K,\U,\V).
    \end{split}
    \end{equation}
    %
\end{cor}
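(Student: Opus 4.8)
The plan is to exploit the self-similar structure of the recursion~\eqref{E:recursion_CT_Disc} and reduce the claim to Proposition~\ref{P:continuous_subset_Disc} applied on the shorter horizon $[0,\tau-t_{k-1}]$. Fix $k\in\{1,\dots,\abs{P}\}$ (the case $k=\abs{P}+1$ is peeled off at the end) and introduce the shifted partition $P'=\{s_0,\dots,s_m\}$ of $[0,\tau-t_{k-1}]$ defined by $s_j:=t_{k-1+j}-t_{k-1}$, where $m:=\abs{P}-k+1$. Then $s_0=0$, $s_m=\tau-t_{k-1}$, and each subinterval length satisfies $s_j-s_{j-1}=t_{k-1+j}-t_{k-2+j}$, so the reach times generated by $P'$ coincide with the tail of those generated by $P$. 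Let $\tilde K_j(P')$ denote the sets produced by running~\eqref{E:recursion_CT_Disc} on $P'$.

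First I would record two monotonicity facts. The operator $\Reach^\sharp_t(\cdot,\U,\V)$ is monotone in its set argument: if $A\subseteq B$, then any $x_0$ witnessing membership in $\Reach^\sharp_t(A,\U,\V)$ also witnesses membership in $\Reach^\sharp_t(B,\U,\V)$, since the terminal condition $x_{x_0}^{u,v}(t)\in A$ implies $x_{x_0}^{u,v}(t)\in B$. Second, because the interval lengths of $P'$ form a subset of those of $P$, we have $\norm{P'}\leq\norm{P}$, hence $M\norm{P'}\leq M\norm{P}$ and therefore $\K_\downarrow(P)=\{x\in\K\mid\dist(x,\K^c)\geq M\norm{P}\}\subseteq\K_\downarrow(P')$. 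This inclusion is the crux: the tail of the $P$-recursion is \emph{not} literally the $P'$-recursion, because the floor set $\K_\downarrow$ depends on the mesh; but it is \emph{contained} in it, which is precisely the direction needed for an under-approximation.

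With these facts in hand I would prove by backward induction on $j\in\{0,\dots,m\}$ that $K_{k-1+j}(P)\subseteq\tilde K_j(P')$. The base case $j=m$ is $K_{\abs{P}}(P)=\K_\downarrow(P)\subseteq\K_\downarrow(P')=\tilde K_m(P')$. For the inductive step, applying~\eqref{E:recursion_CT_Disc} at index $\ell=k-1+j$ gives $K_{k-2+j}(P)=\K_\downarrow(P)\cap\Reach^\sharp_{t_\ell-t_{\ell-1}}(K_{k-1+j}(P),\U,\V)$, and since $t_\ell-t_{\ell-1}=s_j-s_{j-1}$, combining $\K_\downarrow(P)\subseteq\K_\downarrow(P')$, the inductive hypothesis $K_{k-1+j}(P)\subseteq\tilde K_j(P')$, and monotonicity of $\Reach^\sharp$ shows the intersection defining $K_{k-2+j}(P)$ is contained in the intersection defining $\tilde K_{j-1}(P')$. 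Taking $j=0$ yields $K_{k-1}(P)\subseteq\tilde K_0(P')$. Proposition~\ref{P:continuous_subset_Disc}, applied to the genuine partition $P'$ of $[0,\tau-t_{k-1}]$ (nondegenerate since $\tau-t_{k-1}>0$ for $k\leq\abs{P}$, and $f$ is still bounded by $M$ on $\K$), gives $\tilde K_0(P')\subseteq\Disc_{[0,\tau-t_{k-1}]}(\K,\U,\V)$; chaining the two inclusions establishes~\eqref{E:disc[0,tau-t]}.

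Finally, the boundary case $k=\abs{P}+1$ asks that $K_{\abs{P}}(P)\subseteq\Disc_{[0,0]}(\K,\U,\V)$; since $\Disc_{[0,0]}(\K,\U,\V)=\K$ directly from Definition~\ref{D:disc} (the quantifier $\forall t\in[0,0]$ reduces to $x_0\in\K$) and $K_{\abs{P}}(P)=\K_\downarrow(P)\subseteq\K$, this holds trivially. The only genuine obstacle is the mesh-dependence of $\K_\downarrow$ noted above; everything else is bookkeeping with the index shift and the monotonicity of $\Reach^\sharp$. I would take care to keep the degenerate single-point partition out of Proposition~\ref{P:continuous_subset_Disc}, which is exactly why the $k=\abs{P}+1$ case is dispatched separately.
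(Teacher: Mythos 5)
Your proof is correct, and it takes a genuinely different route from the paper. The paper treats the corollary as an immediate byproduct of Proposition~\ref{P:continuous_subset_Disc}'s \emph{proof}: one simply restarts the concatenation argument at any $x_{k-1}\in K_{k-1}(P)$, using the sub-interval controls $u_k,\dots,u_{\abs{P}}$ guaranteed by the recursion, and repeats the distance estimate \eqref{E:max_dist_travel} (trajectories drift at most $M\norm{P}$ from points at distance $\geq M\norm{P}$ from $\K^c$) over the remaining intervals to conclude the trajectory stays in $\K$ on $[0,\tau-t_{k-1}]$. You instead use Proposition~\ref{P:continuous_subset_Disc} as a black box, constructing the shifted partition $P'$ of $[0,\tau-t_{k-1}]$ and showing by backward induction that the tail of the $P$-recursion is contained set-by-set in the $P'$-recursion, via monotonicity of $\Reach^\sharp_t(\cdot,\U,\V)$ in its set argument together with $\K_\downarrow(P)\subseteq\K_\downarrow(P')$. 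You correctly identified the one nontrivial point in this reduction --- the tail of the $P$-recursion is \emph{not} literally the $P'$-recursion because the floor set $\K_\downarrow$ depends on the mesh, and $\norm{P'}\leq\norm{P}$ makes the inclusion go the right way --- and your separate dispatch of the degenerate case $k=\abs{P}+1$ (where $\Disc_{[0,0]}(\K,\U,\V)=\K$ and no partition of $[0,0]$ exists) is a detail the paper glosses over. The paper's route is shorter and makes explicit that the \emph{same} sub-interval controls realize safety (which matters later for Corollary~\ref{Cor:safety_preserving_general}); your route is more modular, isolates exactly where the mesh-dependence enters, and would survive unchanged if the proof of Proposition~\ref{P:continuous_subset_Disc} were replaced by a different one.
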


\begin{rem}
    For a sufficiently large horizon $\tau$, if $K_{k-1}(P) \equiv K_k(P)$ for some $k\in \{1,\dots,\abs{P}\}$ with $t_k \in (0,\tau]$, then $K_0(P)=K_k(P)$ is a subset of the infinite-horizon discriminating kernel $\Disc_{\Real^+}(\K,\U,\V)$ (also known as the largest robust controlled-invariant set \cite{blanchini2008set}).
\end{rem}

The approximation can be made arbitrarily precise by choosing a sufficiently fine partition of time: 
\begin{prop}\label{P:precision_Disc}
    Suppose that the vector field $f$ is bounded on $\K$ by $M$. Then we have
    \begin{equation}\label{E:precision_Disc}
        \Disc_{[0,\tau]} (\inter{\K},\U,\V) \subseteq  \bigcup_{P \in \mathscr{P}([0,\tau])} K_0(P) \subseteq \Disc_{[0,\tau]} (\K,\U,\V).
    \end{equation}
    %
\end{prop}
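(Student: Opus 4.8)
The right-hand inclusion is immediate: Proposition~\ref{P:continuous_subset_Disc} gives $K_0(P) \subseteq \Disc_{[0,\tau]}(\K,\U,\V)$ for \emph{every} $P \in \mathscr{P}([0,\tau])$, and taking the union over all partitions preserves this containment. The substance of the statement is therefore the left-hand inclusion, and my plan is to show that each $x_0 \in \Disc_{[0,\tau]}(\inter{\K},\U,\V)$ belongs to $K_0(P)$ for a \emph{sufficiently fine} partition $P$.

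The first and decisive step is to convert the qualitative statement ``the trajectory can be kept in the open set $\inter{\K}$'' into a \emph{quantitative} one carrying a uniform safety margin. Concretely, I would prove that there exists $\delta>0$ such that for every non-anticipative disturbance $v(\cdot)\in\Vl^\NA_{[0,\tau]}$ there is a feedback control $u(\cdot)\in\Ul^\CL_{[0,\tau]}$ with $\dist\bigl(x_{x_0}^{u,v}(t),\K^c\bigr)\geq\delta$ for all $t\in[0,\tau]$. Once this margin is in hand, I would fix any partition $P$ with $M\norm{P}\leq\delta$, so that the eroded set satisfies $\K_\downarrow(P)\supseteq\{x\mid\dist(x,\K^c)\geq\delta\}$, and every margin-respecting trajectory then visits $\K_\downarrow(P)$ at each partition time.

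It remains to verify the reachability conditions encoded in the recursion~\eqref{E:recursion_CT_Disc}, which I would do by a downward induction on $k$, establishing that any state $y\in\Disc_{[0,\tau-t_{k}]}(\inter{\K},\U,\V)$ with $\dist(y,\K^c)\geq\delta$ belongs to $K_{k}(P)$. The base case $k=\abs{P}$ is exactly $\K_\downarrow(P)\supseteq\{x\mid\dist(x,\K^c)\geq\delta\}$. For the inductive step, given a disturbance on the subinterval of length $t_{k}-t_{k-1}$ I would feed it into the global margin-$\delta$ strategy; because the control is a causal feedback response and the disturbance is non-anticipative, its restriction to that subinterval is well defined and drives the state to $z:=x_{y}^{u,v}(t_{k}-t_{k-1})$, which inherits $\dist(z,\K^c)\geq\delta$ from the margin and lies in $\Disc_{[0,\tau-t_{k}]}(\inter{\K},\U,\V)$ by the dynamic-programming property that the tail of a safe trajectory is safe. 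The inductive hypothesis then yields $z\in K_{k}(P)$, i.e.\ $y\in\Reach^\sharp_{t_{k}-t_{k-1}}(K_{k}(P),\U,\V)$, and together with $y\in\K_\downarrow(P)$ this gives $y\in K_{k-1}(P)$. Taking $k=0$ and $y=x_0$ closes the argument.

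The induction and the passage between the global strategy and its per-subinterval restrictions are routine; the genuine obstacle is the uniform-margin claim. For a \emph{fixed} disturbance the optimal trajectory is a continuous image of the compact interval $[0,\tau]$ contained in the open set $\inter{\K}$, so it automatically keeps a positive distance from $\K^c$; the difficulty is that the infimum of these distances \emph{over all disturbances} could a priori vanish. I would rule this out by compactness: boundedness of $f$ by $M$ makes the admissible trajectories uniformly Lipschitz, hence relatively compact in $C([0,\tau],\X)$ by Arzel\`a--Ascoli, while compactness of $\V$ controls the disturbance signals; combining this with continuity of the map $x(\cdot)\mapsto\min_{t\in[0,\tau]}\dist(x(t),\K^c)$ under uniform convergence and the continuity of the differential game value under the present hypotheses~\cite{Evans_Souganidis_1984}, the infimum over disturbances defining the value is attained and is therefore strictly positive whenever $x_0\in\Disc_{[0,\tau]}(\inter{\K},\U,\V)$, furnishing the required $\delta$.
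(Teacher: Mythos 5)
Your overall route is in fact the same as the paper's: the printed proof also extracts a minimum boundary distance $d$ along a safe trajectory, picks a partition with $\norm{P} < d/M$, and then verifies the recursion \eqref{E:recursion_CT_Disc} segment by segment via tokenization of $u$ and $v$. You are right that the real difficulty is uniformity over disturbances --- the paper's proof takes $d$ from a \emph{single} trajectory (one disturbance, one responding control) and then asserts the segment condition ``for every $v_k$,'' which is exactly the quantifier issue you flag. However, your write-up has two genuine gaps of its own. First, your induction hypothesis --- ``every $y \in \Disc_{[0,\tau-t_k]}(\inter{\K},\U,\V)$ with $\dist(y,\K^c)\geq\delta$ lies in $K_k(P)$'' --- is false as stated: a pointwise margin at $y$ says nothing about the margin enforceable along the future trajectory. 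Take $\dot{x}=-1$ on $\Real$ (singleton $\U$, $\V$), $\K=[0,1]$, so the tail kernel of the interior over horizon $T=\tau-t_k$ is $(T,1)$; the point $y=T+\epsilon$ with $\epsilon$ tiny can satisfy $\dist(y,\K^c)\geq\delta$, yet its unique trajectory ends at $\epsilon < M\norm{P}$, outside $\K_\downarrow(P)$, so $y\notin K_k(P)$. Relatedly, your inductive step invokes ``the global margin-$\delta$ strategy,'' but that strategy is anchored at $x_0$ and need not exist from an arbitrary $y$ in your hypothesis. The statement your step actually uses --- and the one that makes the induction go through --- is the stronger property that margin-$\delta$ safety is \emph{enforceable from} $y$ over the remaining horizon; restated that way (plus a careful restriction/concatenation argument for non-anticipative strategies, which you dismiss as routine but which is where the dynamic-programming property must be checked), the induction is sound.

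Second, and more seriously, your ``decisive step'' --- the uniform margin $\delta$ --- is asserted rather than proved. Arzel\`a--Ascoli gives relative compactness of the \emph{trajectories}, not of the class of non-anticipative strategies, and a uniform limit of trajectories generated by strategies $\varrho_n$ need not be generated by any admissible strategy; \cite{Evans_Souganidis_1984} supplies existence and continuity of the game value, not attainment of the infimum over $\varrho$. Attainment is precisely what your positivity argument needs: kernel membership gives, for each fixed $\varrho$, some $u$ with $\min_{t}\dist(x(t),\K^c)>0$, so a minimizing $\varrho^{*}$ would force the value positive --- but without attainment the infimum over $\varrho$ could a priori vanish while membership in $\Disc_{[0,\tau]}(\inter{\K},\U,\V)$ still holds. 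So your proposal correctly isolates the crux that the printed proof glosses over, but it does not discharge it; closing it requires an additional lemma (e.g., existence of optimal disturbance strategies, or a value-function/viscosity characterization of the kernel under Isaacs-type or convexity hypotheses), not the compactness sketch given.
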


\begin{proof}
    The proof is provided in the Appendix.
\end{proof}

By approximating $\Disc_{[0,\tau]} (\K,\U,\V)$ using the sub-interval maximal reachable sets via recursion \eqref{E:recursion_CT_Disc} we enable the use of scalable Lagrangian techniques for the computation of the discriminating kernel for high-dimensional systems. As we shall see, this will also allow us to synthesize the safety-preserving control laws in a more scalable manner than using the conventional techniques.

\subsection{Control Synthesis: A Safety-Preserving Strategy}

Our approach to synthesizing safety-preserving controllers is based on utilizing the sets computed offline during the analysis phase (i.e.\ when computing an under-approximation of the discriminating kernel). The following corollary forms the basis of our safety-preserving feedback policy. It follows directly from Proposition~\ref{P:continuous_subset_Disc}, its proof, and the recursion~\eqref{E:recursion_CT_Disc}.

\begin{cor}\label{Cor:safety_preserving_general}
     Suppose $K_0(P)\neq \emptyset$ for a fixed time partition $P\in \mathscr{P}([0,\tau])$. For any initial condition $x_0\in K_0(P)$ the concatenation $u(\cdot)$ of sub-interval control inputs $u_k(\cdot) \in \Ul^\CL_{[0, t_k-t_{k-1}]}$ corresponding to robust maximal reachable sets $\Reach^{\sharp}_{t_k-t_{k-1}}(K_k(P),\U,\V)$ for $k=1,\dots,\abs{P}$ is a safety-preserving control law that, for every disturbance, keeps the trajectory $x(\cdot)$ of the system \eqref{E:nonlinear_ss_eqn} with $x(0)=x_0$ contained in $\K$ for all time $t\in [0,\tau]$.
\end{cor}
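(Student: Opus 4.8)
The plan is to obtain the corollary by making explicit the control that is already constructed, if only implicitly, inside the proof of Proposition~\ref{P:continuous_subset_Disc}. That proof certifies $K_0(P)\subseteq\Disc_{[0,\tau]}(\K,\U,\V)$ by producing, for every admissible disturbance, a control keeping the trajectory in $\K$; the corollary simply asserts that this control is the sub-interval concatenation $u(\cdot)$ described in its statement. Thus the task reduces to checking three things: that each sub-interval supplies a steering control via the recursion, that the trajectory never leaves $\K$ while these controls act, and that the pieces assemble into a single admissible law over $[0,\tau]$.

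First I would unwind the recursion~\eqref{E:recursion_CT_Disc} into the two facts used at each stage: for every $k\in\{1,\dots,\abs{P}\}$,
\[
    K_{k-1}(P)\subseteq\K_\downarrow(P)\qquad\text{and}\qquad K_{k-1}(P)\subseteq\Reach^\sharp_{t_k-t_{k-1}}\bigl(K_k(P),\U,\V\bigr).
\]
Fixing an arbitrary $v(\cdot)\in\Vl^\NA_{[0,\tau]}$ and $x_0\in K_0(P)$, I would then argue by induction on $k$ that the state can be carried through the chain $K_0(P),K_1(P),\dots,K_{\abs{P}}(P)$. At the inductive step the state at time $t_{k-1}$ lies in $K_{k-1}(P)$; since $K_{k-1}(P)$ is contained in $\Reach^\sharp_{t_k-t_{k-1}}(K_k(P),\U,\V)$, the defining property of the robust maximal reachable set yields a control $u_k(\cdot)\in\Ul^\CL_{[0,t_k-t_{k-1}]}$ driving the state into $K_k(P)$ at the end of the $k$-th sub-interval, so that $x(t_k)\in K_k(P)$ and the induction continues. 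After $\abs{P}$ steps the state reaches $K_{\abs{P}}(P)=\K_\downarrow(P)\subseteq\K$ at time $\tau$.

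Safety within each sub-interval is where I would reuse the quantitative estimate from the proof of Proposition~\ref{P:continuous_subset_Disc}: because $x(t_{k-1})\in K_{k-1}(P)\subseteq\K_\downarrow(P)$ we have $\dist\bigl(x(t_{k-1}),\K^c\bigr)\geq M\norm{P}$, and since the sub-interval has length $t_k-t_{k-1}\leq\norm{P}$ while $\norm{f}\leq M$ on $\K$, a bootstrap argument confines the trajectory to within distance $M\norm{P}$ of $x(t_{k-1})$, hence inside $\K$, throughout $[t_{k-1},t_k]$. As the sub-intervals cover $[0,\tau]$, the closed-loop trajectory stays in $\K$ for all $t\in[0,\tau]$; concatenating $u_1,\dots,u_{\abs{P}}$ by the rule fixed in the preliminaries yields $u(\cdot)\in\Ul^\CL_{[0,\tau]}$. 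Since $v(\cdot)$ was arbitrary, $u(\cdot)$ is safety-preserving, which is the claim.

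I expect the main obstacle to be the information-pattern bookkeeping rather than the dynamics. One must verify that the tokenization of a non-anticipative strategy over $[0,\tau]$ restricts, after the time shift by $t_{k-1}$ and re-initialization at $x(t_{k-1})$, to a genuinely non-anticipative strategy over the sub-interval $[0,t_k-t_{k-1}]$, so that the existential control guaranteed by $\Reach^\sharp$ may legitimately be invoked there; and, conversely, that the controls extracted sub-interval by sub-interval glue into one admissible feedback law without violating causality. This is exactly the delicacy that the non-anticipative (Evans--Souganidis) formulation is designed to absorb, and I would discharge it by appealing to the same restriction/concatenation properties of non-anticipative strategies already used in the proof of Proposition~\ref{P:continuous_subset_Disc}, noting that the disturbance on $[t_{k-1},t_k]$ depends only on the control applied up to that time.
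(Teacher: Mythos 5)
Your proposal is correct and follows essentially the same route as the paper: the paper proves the corollary by pointing to the proof of Proposition~\ref{P:continuous_subset_Disc}, which likewise chains the states $x_k\in K_k(P)$ through the sub-interval reachable sets, concatenates the per-interval controls, and uses the same $M\norm{P}$ distance estimate against $\K_\downarrow(P)$ to confine the trajectory to $\K$ between grid points. Your explicit attention to the bootstrap step (the bound $\norm{f}\leq M$ holds only on $\K$) and to the restriction/concatenation of non-anticipative strategies merely makes precise details the paper leaves implicit.
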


Corollary~\ref{Cor:safety_preserving_general} asserts that any method that facilitates the synthesis of robust maximal reachability controllers (e.g.\ \cite{Kostousova01, Girard2008, GGM06, Kurzhanski_Mitchell_Varaiya_2006, Nenninger_Frehse_Krebs_2002}) can be employed to form a safety-preserving policy. One such Lagrangian method is the ellipsoidal techniques \cite{Kurzhanski2000a} implemented in Ellipsoidal Toolbox (ET) \cite{KV06}. In the remainder of the paper we focus on these techniques to develop scalable analysis and synthesis algorithms for LTI systems.\footnote{In \cite{kaynama_Aut2013} we proposed a scalable Lagrangian method based on support vectors for the approximation of the viability kernel, and showed that in the context of discrete-time systems, it can outperform the ellipsoidal techniques. However, the support vector method has not yet been extended to handle (a) continuous-time systems, (b) uncertainties/disturbances, or (c) control synthesis. In this paper we employed the ellipsoidal techniques since they allows for all of these requirements.}


\section{Analysis Algorithm: Offline Phase}\label{S:analysis_algorithm}


Consider the case in which \eqref{E:nonlinear_ss_eqn} is an LTI system
\begin{equation}\label{E:linear_ss}
    \dot{x} = Ax+Bu+Gv
\end{equation}
with appropriately sized constant matrices $A$, $B$, and $G$. We will further assume that the constraints $\K$, $\U$, and $\V$ are (or can be closely and correctly approximated by) nonempty compact ellipsoids.

In \cite{kaynama_HSCC2012} we introduced a scalable piecewise ellipsoidal algorithm for under-approximating the viability kernel $\Viab_{[0,\tau]}(\K,\U)$ based on the ellipsoidal techniques \cite{Kurzhanski2000a} for maximal reachability. That result can easily be extended so that the generated set approximates the discriminating kernel $\Disc_{[0,\tau]}(\K,\U,\V)$ by simply computing the intermediate maximal reachable sets for adversarial inputs in line with the analysis in Proposition~\ref{P:continuous_subset_Disc} and its proof. Here we summarize this extension.


\subsection{Piecewise Ellipsoidal Approximation of $\Disc_{[0,\tau]}(\K,\U,\V)$}

\begin{defn}
    An \emph{ellipsoid} in $\Real^n$ is defined as
    \begin{equation}\label{E:ellipsoid}
        \E(q,Q):= \left\{z \in \Real^n \mid \inner{(z-q), Q^{-1}(z-q)} \leq 1 \right\}
    \end{equation}
    with center $q\in\Real^n$ and symmetric positive definite shape matrix $Q \in \Real^{n\times n}$. A \emph{piecewise ellipsoidal} set is the union of a finite number of ellipsoids.
\end{defn}

Given $P\in \mathscr{P}([0,\tau])$ and some $k\in \{1,\dots,\abs{P}\}$, let $K_k(P)=\E(\tilde{x},\widetilde{X})$ be an ellipsoid with center $\tilde{x}$ and shape matrix $\widetilde{X}$. As in \cite{Kurzhanski_SCL_2000}, with $\M \subseteq \{l \in \X \mid \inner{l,l} =1 \}$ a (possibly finite) subset of all unit-norm vectors in $\X$ and $\delta:=t_k-t_{k-1}$, we have for all $t\in[0,\delta]$
%
%
\begin{equation}\label{E:reach_ellps_subset}
     \bigcup_{\ell_\delta \in \M} \E(x^c(t), X_\ell(t)) \subseteq \Reach^\sharp_{\delta-t}(K_k(P),\U,\V), 
\end{equation}
where $x^c(t)$ and $X_\ell(t)$ are the center and the shape matrix of the \emph{internal approximating} ellipsoid at time $t$ that is tangent to $\Reach^\sharp_{\delta-t}(K_k(P),\U,\V)$ in the direction $\ell(t)\in \X$. For a fixed terminal direction $\ell_\delta \in \M$ with $\ell(\delta) = \ell_\delta$, the direction $\ell(t)$ is obtained from the adjoint equation $\dot{\ell}(t)=-\tr{A}\ell(t)$. The center $x^c(t)$ (with $x^c(\delta)=\tilde{x}$) and the shape matrix $X_\ell(t)$ (with $X_\ell(\delta) = \widetilde{X}$) are determined from differential equations described in \cite{Kurzhanski2002}. We denote the reachable set at $t=0$ computed for a single direction $\ell_\delta$ by $\Reach_\delta^{\sharp[\ell_\delta]}(K_k(P),\U,\V)$. Clearly, this set is an \emph{ellipsoidal subset} of the actual reachable set.


Let $\ellpfunc\colon 2^\X \to 2^\X$ denote a set-valued function that maps a nonempty set to its maximum volume inscribed ellipsoid. For a fixed terminal direction $\ell_\tau \in \M$, the recursion
%
\begin{equation}\label{Alg:PWE_recursion}
\begin{split}
    K_{k-1}^{[\ell_\tau]}(P) = \ellpfunc(K_{\abs{P}}(P) &\cap \Reach^{\sharp[\ell_\tau]}_{t_k-t_{k-1}}(K_k^{[\ell_\tau]}(P),\U,\V))\\
    &\text{for} \quad k\in\{1,\dots,\abs{P}\}
\end{split}
\end{equation}
with base case $K_{\abs{P}}^{[\ell_\tau]}(P)=K_{\abs{P}}(P) = \K_\downarrow(P)$ generates an ellipsoidal set $K_0^{[\ell_\tau]}(P)$ such that
\begin{equation}\label{Alg:PWE_endset}
    \bigcup_{\ell_\tau\in\M} K_0^{[\ell_\tau]}(P) =: K_0^\cup(P) \subseteq \Disc_{[0,\tau]}(\K,\U,\V).
\end{equation}
The set $K_0^\cup(P)$ is therefore a \emph{piecewise ellipsoidal under-approximation} of the discriminating kernel (cf.\ \cite{kaynama_HSCC2012,kaynama_thesis} for more detail). Here, $\K_\downarrow(P)$ can either be formed by eroding $\K$ by $\B(M\norm{P})$, or (to achieve less conservatism) via the procedure described in \cite[Remark 6.1]{kaynama_thesis}.



If they are needed, the intermediate discriminating kernels \eqref{E:disc[0,tau-t]} are under-approximated via
\begin{equation}\label{E:intermediate_Disc}
    \bigcup_{\ell_\tau\in\M} K_{k-1}^{[\ell_\tau]}(P) =: K_{k-1}^\cup(P) \subseteq \Disc_{[0,\tau-t_{k-1}]}(\K,\U,\V).
\end{equation}

\subsubsection{Approximation Error}

Approximation of the intersection in \eqref{E:recursion_CT_Disc} with its maximum volume inscribed ellipsoid in \eqref{Alg:PWE_recursion} results in a loss of accuracy in the computed under-approximation of the discriminating kernel. The choice of partition $P$, i.e.\ the discretization of $[0,\tau]$, affects this accuracy loss significantly. In particular, we have found in experiments (see ~\cite{kaynama_HSCC2012}) that a denser choice appears to yield a more accurate approximation of the kernel. This is likely due to the fact that with a finer partition (i) the eroded set $\K_\downarrow(P)$ is larger, and (ii) the reachable sets change very little over each time steps. Thus, although a larger number of intersections needs to be performed, the intersection error at every iteration becomes smaller which yields a reduced overall accumulated error.

While the volume of the true intersection set may prove difficult to compute, the volume of its maximum volume inscribed and minimum volume circumscribed \cite{ros2002ellipsoidal} ellipsoids can easily be calculated from the determinant of their shape matrices. This provides a computable estimate of the accuracy loss at every iteration since the volume of the error is bounded above by the difference between the volume of its extremal ellipsoids. An adaptive partitioning algorithm could in principle be designed to manage this error, although we have not yet pursued this direction of research.

\subsection{Controlled-Invariance}

While it can be shown (via a simple induction) that in recursion~\eqref{E:recursion_CT_Disc}
\begin{equation}
    K_{\abs{P}}(P) \cap \Reach^{\sharp}_{t_k-t_{k-1}}(K_k(P),\U,\V)
    = K_k(P) \cap \Reach^{\sharp}_{t_k-t_{k-1}}(K_k(P),\U,\V),
\end{equation}
in the piecewise ellipsoidal algorithm~\eqref{Alg:PWE_recursion} this equality does not hold due to the operator $\ellpfunc(\cdot)$: for any two sets $\Y$ and $\Z$, $\Y \subseteq \Z \not \Rightarrow \ellpfunc(\Y) \subseteq \ellpfunc(\Z)$. Therefore in general,
\begin{equation}
    \ellpfunc(K_{\abs{P}}(P) \cap \Reach^{\sharp[\ell_\tau]}_{t_k-t_{k-1}}(K_k^{[\ell_\tau]}(P),\U,\V))
    \neq \ellpfunc(K_k^{[\ell_\tau]}(P) \cap \Reach^{\sharp[\ell_\tau]}_{t_k-t_{k-1}}(K_k^{[\ell_\tau]}(P),\U,\V))
\end{equation}
Consequently, in contrast to the recursion~\eqref{E:recursion_CT_Disc} for which the necessary and sufficient condition $K_k(P) \subseteq \Reach^{\sharp}_{t_k-t_{k-1}}(K_k(P),\U,\V)$ for some $k$ implies convergence, no similar condition exists for the piecewise ellipsoidal algorithm~\eqref{Alg:PWE_recursion}. 



\begin{defn}[Robust Controlled-Invariant Set]\label{D:robust_cntrl_inv}
    A set $\C\subset \X$ is robust controlled-invariant (or simply, controlled-invariant) under \eqref{E:nonlinear_ss_eqn} if for every non-anticipative disturbance there exists a control that renders $\C$ invariant (meaning $x_0\in \C  \Rightarrow x(t)\in \C$ $\forall t > 0$) under the corresponding closed-loop dynamics.
\end{defn}

Consider the recursions~\eqref{E:recursion_CT_Disc} and \eqref{Alg:PWE_recursion}. Unlike the discrete-time case \cite{blanchini2008set}, even if a fixed-point exists it is not necessarily controlled-invariant. However, the following sufficient condition (stated here only in terms of recursion~\eqref{Alg:PWE_recursion}) holds.
\begin{prop}[Controlled-Invariance Condition]\label{P:control_inv_cond}
For a fixed $P\in \mathscr{P}([0,\tau])$, if there exist $k\in\{1,\dots,\abs{P}\}$ and $\ell_\tau \in \M$ such that
\begin{equation}\label{E:control_inv_cond}
    K_{k}^{[\ell_\tau]}(P) \subseteq \Reach^{\sharp[\ell_\tau]}_{t_k-t_{k-1}}(K_k^{[\ell_\tau]}(P),\U,\V),
\end{equation}
then the set
\begin{equation}
    \bigcup_{t\in [t_{k-1},t_k]} \Reach^{\sharp[\ell_\tau]}_{t_k-t}(K_k^{[\ell_\tau]}(P),\U,\V)
\end{equation}
is robust controlled-invariant. Furthermore, the set
\begin{equation}
    K_k^{[\ell_\tau]}(P) \oplus \B(M\abs{t_k-t_{k-1}})
\end{equation}
contains all corresponding closed-loop trajectories (although it is not itself invariant).
\end{prop}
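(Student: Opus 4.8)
The plan is to recognize the target set as a single-direction ellipsoidal backward reachable tube to $S := K_k^{[\ell_\tau]}(P)$ and to prove invariance by a concatenation argument that repeatedly ``descends'' this tube, invoking the hypothesis~\eqref{E:control_inv_cond} only to license a restart each time the trajectory returns to $S$. Writing $\delta := t_k - t_{k-1}$ and reparametrizing the union via $s = t_k - t$, the set in question becomes $\mathcal{T} := \bigcup_{s \in [0,\delta]} R_s$ with $R_s := \Reach^{\sharp[\ell_\tau]}_s(S,\U,\V)$, so that $R_0 = S$ and, by the hypothesis, $S \subseteq R_\delta$. These $R_s$ are precisely the internal approximating ellipsoids $\E(x^c(\delta-s), X_\ell(\delta-s))$ of~\eqref{E:reach_ellps_subset} generated from the terminal direction $\ell_\tau$ with $\ell(\cdot)$ propagated by the adjoint $\dot\ell = -\tr{A}\ell$; hence $\{R_s\}_{s\in[0,\delta]}$ is a continuous, flow-nested family rather than an arbitrary union, which is what makes the argument work for a single $\ell_\tau$ (as opposed to the piecewise ellipsoidal union over all of $\M$).

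The key ingredient I would invoke is the flow (nesting) property of this single-direction tube supplied by the ellipsoidal synthesis of \cite{Kurzhanski2000a, Kurzhanski2002, kurzhanskii96ellipsoidal}: for any $x_0 \in R_s$ and any non-anticipative disturbance $v(\cdot)$ there is a feedback control $u^\star(\cdot) \in \Ul^\CL$---the one associated with direction $\ell_\tau$ in Corollary~\ref{Cor:safety_preserving_general}---that keeps the closed-loop state in $R_{s-\eta}$ for every $\eta \in [0,s]$ and reaches $R_0 = S$ at time $s$; since each $R_{s-\eta} \subseteq \mathcal{T}$, such a descent never leaves $\mathcal{T}$. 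Invariance then follows by splicing: given $x_0 \in \mathcal{T}$, pick $s_0$ with $x_0 \in R_{s_0}$ and apply $u^\star$ on $[0,s_0]$ to arrive at $x(s_0) \in S$ while staying in $\mathcal{T}$; because $S \subseteq R_\delta$ by~\eqref{E:control_inv_cond}, the arrival point lies in $R_\delta$, so a fresh full-length descent of duration $\delta$ can be launched that again returns to $S \subseteq R_\delta$ without leaving $\mathcal{T}$. Iterating indefinitely gives $x(t) \in \mathcal{T}$ for all $t \ge 0$, and non-anticipativity survives the splicing because each phase's control reacts only to the disturbance realized on that phase (the same tokenization/concatenation mechanism already used for Corollary~\ref{Cor:safety_preserving_general}). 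By Definition~\ref{D:robust_cntrl_inv}, $\mathcal{T}$ is robust controlled-invariant.

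For the ``furthermore'' claim I would argue from the velocity bound. Any $x_0 \in R_s$ is steered to $S$ in time $s \le \delta$ along a trajectory that remains in $\K$ (guaranteed by the safety construction underlying the reachable sets, i.e.\ the same reasoning that fixes the erosion $\K_\downarrow$ in Proposition~\ref{P:continuous_subset_Disc}), so with $x(s) \in S$ we get $\norm{x_0 - x(s)} \le \int_0^s \norm{f}\,d\sigma \le M\delta$. Hence $\mathcal{T} \subseteq S \oplus \B(M\delta)$, and every closed-loop trajectory of the invariance controller, living in $\mathcal{T}$, lies in $K_k^{[\ell_\tau]}(P) \oplus \B(M\abs{t_k-t_{k-1}})$. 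This bounding set is not itself invariant precisely because it strictly contains the actual invariant tube $\mathcal{T}$.

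The hardest part will be rigorously pinning down the flow/nesting property of the single-direction ellipsoidal tube and its feedback law---that the Kurzhanski internal approximation computed for a fixed $\ell_\tau$ genuinely satisfies $x(\eta) \in R_{s-\eta}$ under $u^\star$ for \emph{every} non-anticipative disturbance---since this is the one place the ellipsoidal structure does real work and where restricting to a single $\ell_\tau$ is essential. By contrast, the role of hypothesis~\eqref{E:control_inv_cond} is narrow but indispensable: it is used solely to ensure each return to $S$ lands inside $R_\delta$, so that a new descent can be started and the induction closed.
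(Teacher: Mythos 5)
Your proposal is correct and follows essentially the same route as the paper: the paper's proof likewise invokes the semi-group and finite-time controlled-invariance (descent) property of the single-direction maximal reachable tube from \cite{kurzhanskii96ellipsoidal}, uses the hypothesis \eqref{E:control_inv_cond} only to re-enter the full-length tube upon each arrival in $K_k^{[\ell_\tau]}(P)$, and concatenates the resulting sub-interval inputs (with $\delta^*$ playing the role of your $s_0$) to conclude invariance for all $t \geq 0$. The ``furthermore'' claim is dispatched identically, via the velocity bound $M$ as in \eqref{E:max_dist_travel}.
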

\begin{proof}
    Suppose such $k$ and $\ell_\tau$ exist. Let $\delta := t_k - t_{k-1} > 0$ and define a shorthand notation $\R_\text{inv}:=\bigcup_{t\in [t_{k-1},t_k]} \Reach^{\sharp[\ell_\tau]}_{t_k-t}(K_k^{[\ell_\tau]}(P),\U,\V)$.
    By the semi-group and finite-time controlled-invariance properties of the maximal reachable tube \cite{kurzhanskii96ellipsoidal} we have that $\forall x_0 \in \R_\text{inv}$ $\exists \delta^* \in [0, \delta]$ $\forall v_1 \in \Vl^\NA_{[0,\delta^*]}$ $\exists u_1 \in \Ul^\CL_{[0,\delta^*]}$ $\forall t\in [0,\delta^*]$ $x(t) \in \R_\text{inv} \wedge x(\delta^*)\in K_k^{[\ell_\tau]}(P)$.
    But via \eqref{E:control_inv_cond}, $K_k^{[\ell_\tau]}(P) \subseteq \Reach^{\sharp[\ell_\tau]}_{t_k-t_{k-1}}(K_k^{[\ell_\tau]}(P),\U,\V) \subseteq \R_\text{inv}$. Thus, $x(\delta^*) \in \Reach^{\sharp[\ell_\tau]}_{t_k-t_{k-1}}(K_k^{[\ell_\tau]}(P),\U,\V)$.
    Reset time and let $x_0 := x(\delta^*)$. We know for $x_0$ that $\forall v_2 \in \Vl^\NA_{[0,\delta]}$ $\exists u_2 \in \Ul^\CL_{[0,\delta]}$ $\forall t \in [0,\delta]$ $x(t) \in \R_\text{inv} \wedge x(\delta)\in K_k^{[\ell_\tau]}(P)$.
    Performing this reset recursively with $v_i$ and $u_i$ for $i=3,\dots$ we can concatenate the inputs to state that $\forall x_0 \in \R_\text{inv}$ $\forall v = \{v_i\}_{i=1}^\infty \in \Vl^\NA_{\Real^+}$ $\exists u = \{u_i\}_{i=1}^\infty \in \Ul^\CL_{\Real^+}$ $\forall t\in [0, \delta^* + \sum_{i=2}^\infty \delta)=\Real^+$ $x(t)\in \R_\text{inv}$.
    The second claim is clear from the fact that over the sub-interval $[t_{k-1},t_k]$ the trajectories can only travel a maximum distance of $M\abs{t_k-t_{k-1}}$ away from the target $K_k^{[\ell_\tau]}(P)$; see \eqref{E:max_dist_travel}.
\end{proof}

%

We will later show that this result can be used for synthesis of \emph{infinite-horizon} safety controllers without the need for convergence (or even termination) of algorithm~\eqref{Alg:PWE_recursion} during the offline phase.

\section{Synthesis Algorithm: Online Phase}\label{S:synthesis_algorithm}

Let the control set $\U$ be a compact ellipsoid with center $\mu \in \Real^{m_u}$ and shape matrix $U\in \Real^{m_u \times m_u}$:
\begin{equation}
    \U=\E(\mu, U).
\end{equation}
Based on the piecewise ellipsoidal algorithm described above, we can form a safety-preserving feedback policy taking values in $\U$ that keeps the trajectory of the system in $\K$ over the entire time horizon (despite the actions of the disturbance). We do this by concatenating the sub-interval robust maximal reachability control laws according to Corollary~\ref{Cor:safety_preserving_general}.

Suppose that while computing the maximal reachable sets $\Reach^{\sharp[\ell_\tau]}_{t_k-t_{k-1}}(K_k^{[\ell_\tau]}(P),\U,\V)$ in \eqref{Alg:PWE_recursion} we can also compute
\begin{equation}
    \Reach^{\sharp[\ell_\tau]}_{t_k-t}(K_k^{[\ell_\tau]}(P),\U,\V) \quad \forall t\in (t_{k-1},t_k).
\end{equation}
That is, we compute the entire reachable \emph{tube} over the $k$th sub-interval. For fixed $\ell_\tau\in\M$ and $k\in\{1,\dots,\abs{P}\}$ let $x^{c[\ell_\tau]}_k(t-t_{k-1})$ and $X^{[\ell_\tau]}_{\ell,k}(t-t_{k-1})$ denote the center and the shape matrix of the ellipsoid $\Reach^{\sharp[\ell_\tau]}_{t_k-t}(K_k^{[\ell_\tau]}(P),\U,\V)$ at time $t\in[t_{k-1},t_k]$ with $K_k^{[\ell_\tau]}(P)=\E(x^{c[\ell_\tau]}_k(t_k-t_{k-1}),X^{[\ell_\tau]}_{\ell,k}(t_k-t_{k-1}))$ and terminal direction $\ell_\tau$. Define a shorthand notation
%
%
\begin{equation}\label{E:shorthand_notation_R}
    \begin{split}
        \R(t,k) &:= \bigcup_{\ell_\tau\in\M} \R^{[\ell_\tau]}(t,k)\\
                &:=\bigcup_{\ell_\tau\in\M} \Reach^{\sharp[\ell_\tau]}_{t_k-t}(K_k^{[\ell_\tau]}(P),\U,\V)\\
                &\,\, =\bigcup_{\ell_\tau\in\M} \E(x^{c[\ell_\tau]}_k(t-t_{k-1}),X^{[\ell_\tau]}_{\ell,k}(t-t_{k-1})),
    \end{split}
\end{equation}
and suppose $\R(0,1)\neq \emptyset$.

We begin by defining a \emph{pseudo-time} function $\sigma \colon \Real^+ \to [0,\tau]$ with varying rate of change $\dot{\sigma}(t)\in [0,1]$ such that with the initial time $s\in[0,t]$,
\begin{equation}\label{E:pseudotime_defn}
    \sigma(t)= s + \int_s^t \dot{\sigma}(\lambda)d\lambda.
\end{equation}
%
For a fixed initial time $s$, this function is continuous and non-decreasing. However, while the global time $t$ increases strictly monotonically, $\sigma$ increases with varying rates as determined by $\dot{\sigma}$. Clearly we can see that $\sigma(t)\leq t$. For a fixed $k\in \{1,\dots,\abs{P}\}$ let
%
\begin{equation}
    \sigma(t)\in [\underline{\sigma}_k,\overline{\sigma}_k]=:\mathbb{S}_k 
\end{equation}
with
\begin{equation}
    \underline{\sigma}_k:=t_{k-1} \quad \text{and} \quad \overline{\sigma}_k:=t_k.
\end{equation}
Denote by $\mathbb{T}_k$ a totally ordered subset of $\Real^+$ such that (via an abuse of notation)
\begin{equation}
    \sigma(\mathbb{T}_k)=\mathbb{S}_k
\end{equation}
in the sense that with $\underline{\theta}_k:=\min(\mathbb{T}_k)$ and $\overline{\theta}_k:=\max(\mathbb{T}_k)$,
\begin{equation}
    \sigma(\underline{\theta}_k)=\underline{\sigma}_k \quad \text{and} \quad \sigma(\overline{\theta}_k)=\overline{\sigma}_k.
\end{equation}
%
Note that $\bigcup_{k=1}^{\abs{P}} \mathbb{S}_k = [0,\tau]$, while $\bigcup_{k=1}^{\abs{P}} \mathbb{T}_k \supseteq [0,\tau]$ since $\abs{\mathbb{T}_k} \geq \abs{\mathbb{S}_k}$ $\forall k$.


This pseudo-time variable will be used as part of our formulation of a hybrid automaton that implements the desired safety-preserving controller (Fig.~\ref{F:block_diagram}). The mode of the automaton dictates the rate of change of $\sigma$; therefore, the controller has the possibility of potentially preserving safety over a larger horizon than $\tau$ even though the computations during the offline phase have only been performed over $[0,\tau]$. This hybrid automaton is described next.

\begin{figure}[t]
    \centering
    \scalebox{.5}{\input{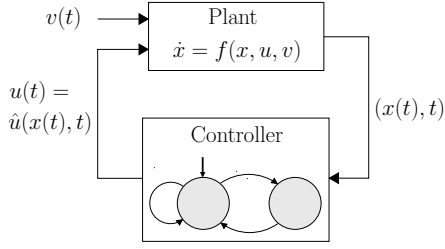}}
    \caption{Block diagram of the closed-loop system employing the safety-preserving controller of Section~\ref{S:hybrid_controller}.}
    \label{F:block_diagram}
\end{figure}

\subsection{The Safety-Preserving Hybrid Controller}\label{S:hybrid_controller}

Following the formalism introduced in ~\cite{Tomlin2000a}, consider the hybrid automaton
\begin{equation}\label{E:automaton_H}
    H=(\Q,\InpEx,\InpIn,\Init,\Dom,E,G,R,\U_\text{fb})
\end{equation}
depicted in Fig.~\ref{F:automaton}
\begin{figure*}[t]
    \centering
    \scalebox{.63}{\input{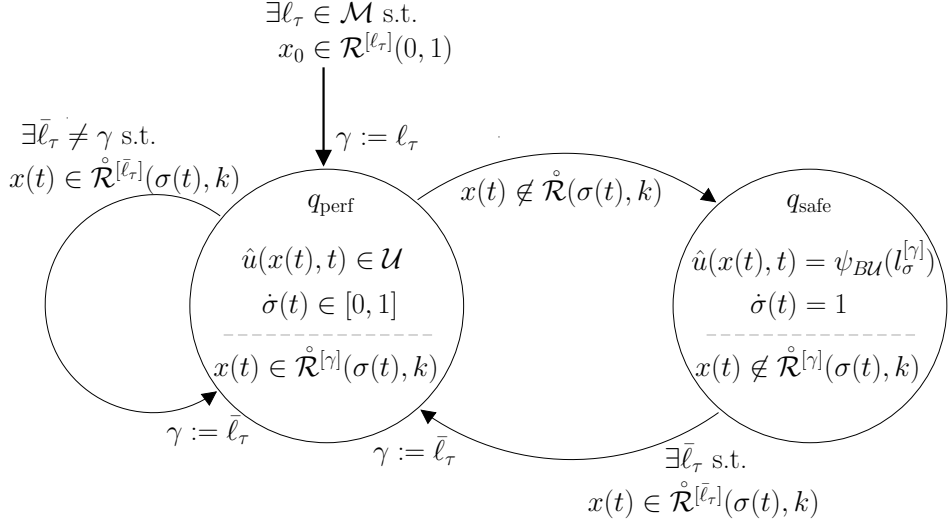}}
    \caption{The graph of the hybrid automaton $H$ representing the safety-preserving controller.}
    \label{F:automaton}
\end{figure*}
where $\Q=\{q_\text{perf}, q_{\text{safe}}\}$ is the set of discrete states with $q_\text{perf}$ representing the case in which the controller is free to choose any value in $\U$ (``performance'' mode) and with $q_\text{safe}$ representing the case in which the controller is required to return an optimal safety-preserving law (``safety'' mode).

The inputs to the controller are drawn from the sets $\InpEx\subseteq \X \times \Real^+$ (\emph{external} input) and $\InpIn \subseteq \M$ (\emph{internal} input). The external input is the pair $(x(t),t)\in \InpEx$ and the internal input is the direction vector $\gamma \in \InpIn$. The initial state of the automaton $\Init \subseteq \Q \times \InpEx \times \InpIn$ is assumed to be $\Init = (q_\text{perf}, x_0, 0, \{\gamma \mid \exists \ell_\tau \in \mathcal{M},\, x_0 \in \R^{[\ell_\tau]}(0,1),\, \gamma=\ell_\tau \})$. The domains $\Dom(\cdot,\gamma,\sigma(t)) \colon \Q \to 2^\X$ of the automaton for every $\gamma\in \InpIn$ and $\sigma(t)\in [0,\tau]$ are $\Dom(q_\text{perf},\gamma, \sigma(t)) = \inter{\R}^{[\gamma]}(\sigma(t),k)$ and $\Dom(q_\text{safe},\gamma, \sigma(t)) = (\Dom(q_\text{perf},\gamma, \sigma(t)))^c$. The domains for every $(x(t),t)\in \InpEx$ specify $(\gamma,\sigma(t))$-varying invariants that must be satisfied in each mode.

The rate of change of $\sigma$ varies in $q_\text{perf}$ such that $\dot\sigma(t)\in[0,1]$ with $\dot\sigma(t)=0$ indicating a pseudo-time freeze. This allows the controller to maintain its invariant/domain in $q_\text{perf}$ for as long as possible by freezing (or slowing down) the progression of the pseudo-time. On the other hand, in $q_\text{safe}$ the pseudo-time $\sigma$ is forced to change with the same rate as the global time $t$ since the state is already on the boundary (or the exterior) of all possible domains of $q_\text{perf}$ and thus $\sigma$ must progress with $t$ in order for $H$ to preserve safety. As we shall see, this varying rate of change can ensure safety over a longer time horizon than $\tau$.

The edges $E\subseteq \Q\times \Q$ are $E=\{(q_\text{perf},q_\text{perf}), \allowbreak (q_\text{perf},q_\text{safe}),\allowbreak (q_\text{safe}, q_\text{perf})\}$. The guards $G(\cdot,\gamma,\sigma(t))\colon E \to 2^\X$ for every $\gamma\in \InpIn$ and $\sigma(t)\in [0,\tau]$ are conditions on $(x(t),t)\in\InpEx$ defined as:
$G((q_\text{perf},q_\text{safe}),\gamma,\sigma(t))=(\inter{\R}(\sigma(t),k))^c$, $G((q_\text{safe}, q_\text{perf}),\gamma,\sigma(t))= \{\inter{\R}^{[\bar\ell_\tau]}(\sigma(t),k) \allowbreak\: \allowbreak\text{for some}\; \allowbreak \bar\ell_\tau\in\InpIn\}$, and $G((q_\text{perf}, q_\text{perf}),\gamma,\sigma(t))=\{\inter{\R}^{[\bar\ell_\tau]}(\sigma(t),k) \allowbreak\: \allowbreak\text{for some}\; \allowbreak \bar\ell_\tau\in\InpIn,\, \bar\ell_\tau\neq \gamma\}$.
The domains and the guards are chosen in terms of the \emph{interior} of the sets $\R^{[\gamma]}(\sigma(t),k)$ to ensure that the automaton $H$ is non-blocking and that transitions over $E$ can take place when necessary. A transition corresponding to an edge is enabled for every $t$ if $x(t)$ satisfies its guard. For example, the automaton can make a transition from $q_\text{safe}$ to $q_\text{perf}$ over the edge $(q_\text{safe},q_\text{perf})\in E$ for a fixed $(\gamma, x(t),t) \in \InpIn \times \InpEx$ if $\exists \bar\ell_\tau$ (including $\bar\ell_\tau=\gamma$) s.t.\ $x(t)\in \inter{\R}^{[\bar\ell_\tau]}(\sigma(t),k)$.

The reset map $R\colon E \times \InpIn \to \InpIn$ resets the internal input via $R(q_\text{perf}, q_\text{perf},\gamma)=\bar \ell_\tau$, $R(q_\text{perf}, q_\text{safe},\gamma)=\gamma$, and $R((q_\text{safe}, q_\text{perf}),\gamma)=\bar \ell_\tau$. Notice that if $\ell_\tau = \gamma$ satisfies the guard on $(q_\text{safe}, q_\text{perf})$, the reset map on the edge $(q_\text{safe}, q_\text{perf})$ is simply $R(q_\text{safe}, q_\text{perf},\gamma)=\gamma$.

Finally, the output of $H$ is a set-valued map $\U_\text{fb}\colon \Q \times \InpIn \times \InpEx \to 2^{\U}$ given by
\begin{align}
\begin{cases}
    \U_\text{fb}(q_\text{perf},\gamma,x(t),t)=\U;\\
    \U_\text{fb}(q_\text{safe},\gamma,x(t),t)=\psi_{B\U}(l^{[\gamma]}_\sigma),
\end{cases}
\end{align}
where
\begin{align}
    \psi_{B\U}\colon \X &\to \U, \notag\\
    l^{[\gamma]}_\sigma &\mapsto \mu - U\tr{B}l^{[\gamma]}_\sigma \Inner{l^{[\gamma]}_\sigma,B U\tr{B}l^{[\gamma]}_\sigma}^{-1/2}\label{E:psi}
\end{align}
is chosen so that $B\psi_{B\U}(l^{[\gamma]}_\sigma)$ is the support vector of the set $B\U=\E(B \mu, B U \tr{B})\subseteq \X$ in the direction $-l^{[\gamma]}_\sigma\in \X$ with
\begin{equation}\label{E:control_l(x(t))}
    l^{[\gamma]}_{\sigma}=(X^{[\gamma]}_{\ell,k}(\sigma(t)-\underline{\sigma}_k))^{-1}(x(t)-x^{c[\gamma]}_{k}(\sigma(t)-\underline{\sigma}_k)).
\end{equation}
%
%
This strategy is based on the optimal control design presented in \cite{kurzhanskii96ellipsoidal} and \cite{Kurzhanskiy_Varaiya_2008}.

We allow non-determinism in the executions of the hybrid automaton to formulate a non-restrictive policy (in the sense of \cite{Tomlin2000a}); $q=q_\text{safe}$ only when safety is at stake, and $q=q_\text{perf}$ otherwise. As we shall see, the primary objective of the controller---to preserve safety---is achievable regardless of this behavior. 

\begin{thm}[Safety-Preserving Controller]\label{Thm:safety_preserving_control}
    For a given partition $P\in \mathscr{P}([0,\tau])$ for any $x_0 \in K_0^\cup(P)$ where $K_0^\cup(P)$ is the piecewise ellipsoidal set obtained through \eqref{Alg:PWE_recursion}--\eqref{Alg:PWE_endset}, there exists a time $T \geq \tau$, $\sigma(T)=\tau$, such that for any non-anticipative disturbance $v$ with $v(t) \in \V$ the (non-deterministic) feedback policy
    \begin{equation}\label{E:closedloop_control}
        u(t) = \hat{u}(x(t),t) \in \U_{\textup{fb}}
    \end{equation}
    generated by the automaton $H$ keeps the trajectory $x(\cdot)$ of the system \eqref{E:linear_ss} with initial condition $x(0)=x_0$ contained in $\K$ for all time $t\in[0,T]$.
\end{thm}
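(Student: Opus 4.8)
The plan is to follow a single ``witness'' terminal direction $\gamma$ at a time and to show that the automaton funnels the state through the chain of single-direction tubes $\R^{[\gamma]}(\sigma,k)$ while spending the pseudo-time $\sigma$ as a reachability budget, never leaving $\K$. I would begin by initializing: since $x_0\in K_0^\cup(P)=\bigcup_{\ell_\tau}K_0^{[\ell_\tau]}(P)$ and $K_0^{[\ell_\tau]}(P)=\R^{[\ell_\tau]}(0,1)$, there is a $\gamma=\ell_\tau$ with $x_0\in\R^{[\gamma]}(0,1)$, which is exactly the direction fixed by $\Init$; the execution starts in $q_\text{perf}$ with this $\gamma$ and $\sigma(0)=0$. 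The invariant I would propagate is that at every $t$ the state lies in $\R^{[\gamma]}(\sigma(t),k)$ for the active direction $\gamma$ and the index $k$ with $\sigma(t)\in\mathbb{S}_k$.

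The $\K$-containment is the easy half and is inherited from Proposition~\ref{P:continuous_subset_Disc}. Each target satisfies $K_k^{[\gamma]}(P)=\ellpfunc(\K_\downarrow(P)\cap\cdots)\subseteq\K_\downarrow(P)$, and every state of $\R^{[\gamma]}(\sigma,k)$ reaches $K_k^{[\gamma]}(P)$ in time $t_k-\sigma\leq\norm{P}$; by the velocity bound $\norm{f}\leq M$ such a state lies within $M\norm{P}$ of $\K_\downarrow(P)$, so $\R^{[\gamma]}(\sigma,k)\subseteq\K_\downarrow(P)\oplus\B(M\norm{P})\subseteq\K$, exactly as in the second claim of Proposition~\ref{P:control_inv_cond} and~\eqref{E:max_dist_travel}. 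Hence, as long as the invariant holds the trajectory stays in $\K$, independently of the active mode and of how $\dot\sigma\in[0,1]$ is chosen in $q_\text{perf}$ (freezing the pseudo-time does not jeopardize containment, since the domain invariant confines the state to $\R^{[\gamma]}(\sigma,k)\subseteq\K$ regardless of $\dot\sigma$).

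The heart of the argument---and the step I expect to be hardest---is the invariant itself over one sub-interval, for which I would invoke the ellipsoidal maximal-reachability theory of~\cite{kurzhanskii96ellipsoidal,Kurzhanski2002,Kurzhanskiy_Varaiya_2008}. There the internal ellipsoidal tube $\E(x^{c[\gamma]}_k(\sigma-\underline\sigma_k),X^{[\gamma]}_{\ell,k}(\sigma-\underline\sigma_k))$ is robustly controlled-invariant under the support-vector feedback: for a state on its boundary the law $\psi_{B\U}(l^{[\gamma]}_\sigma)$ of~\eqref{E:psi}--\eqref{E:control_l(x(t))}, with $l^{[\gamma]}_\sigma$ the outward normal at $x(t)$, steers $\dot x=Ax+Bu+Gv$ back into the (shrinking) tube for \emph{every} $v\in\V$. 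The delicacy is that this must hold against a non-anticipative adversary, i.e.\ survive the $\forall v\,\exists u$ order of Definition~\ref{D:robust_max_reach_set}; this is precisely what the worst-case (geometric-difference) evolution of $X^{[\gamma]}_{\ell,k}$ builds in, making the support-vector law optimal against the disturbance. Forcing $\dot\sigma=1$ in $q_\text{safe}$ is what lets the state track this real-time-evolving boundary, whereas in $q_\text{perf}$ the state is in the open interior $\inter{\R}^{[\gamma]}(\sigma,k)$, so any $u\in\U$ and any $\dot\sigma\in[0,1]$ are admissible; the guards then trigger a transition to $q_\text{safe}$ (or a direction reset to some $\bar\ell_\tau$ via the $q_\text{perf}$ self-loop, after which the invariant holds for the new $\gamma$) exactly when the state is about to leave an interior, so the invariant is never violated and $H$ is non-blocking.

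Finally I would chain the sub-intervals and bound the horizon. When $\sigma$ reaches $\overline\sigma_k=t_k$ the tube has shrunk to $\R^{[\gamma]}(t_k,k)=K_k^{[\gamma]}(P)$, and since $\ellpfunc$ returns an inscribed ellipsoid the recursion~\eqref{Alg:PWE_recursion} gives $K_k^{[\gamma]}(P)\subseteq\R^{[\gamma]}(t_k,k+1)$, the start of the next sub-interval's tube for the same $\gamma$; thus the invariant carries across the index change, and a finite induction over $k=1,\dots,\abs{P}$ propagates it to $\sigma=t_{\abs{P}}=\tau$. Because $\dot\sigma\in[0,1]$ we have $\sigma(t)\leq t$, so the real time $T$ with $\sigma(T)=\tau$ satisfies $T\geq\tau$; such a $T$ is attained since $\sigma$ is non-decreasing with $\dot\sigma=1$ whenever $H$ is in $q_\text{safe}$, so the controller can always advance $\sigma$ safely until the budget $\tau$ is spent (any freezing in $q_\text{perf}$ only enlarges $T$). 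Combining the invariant with the $\K$-containment yields $x(t)\in\K$ for all $t\in[0,T]$.
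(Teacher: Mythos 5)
Your proposal is correct and follows essentially the same route as the paper: the same invariant $x(t)\in\R^{[\gamma]}(\sigma(t),k)$ maintained by the support-vector law $\psi_{B\U}(l^{[\gamma]}_\sigma)$ with $\dot\sigma=1$ in $q_\text{safe}$, the same chaining across sub-intervals via $K_k^{[\gamma]}(P)\subseteq\R^{[\gamma]}(t_k,k+1)$ from the recursion~\eqref{Alg:PWE_recursion}, the same $\K$-containment via the velocity bound of Proposition~\ref{P:continuous_subset_Disc}, and the same horizon argument from $\sigma(t)\leq t$. The only difference is presentational: where you cite the robust invariance of the internal ellipsoidal tube as a known property from \cite{kurzhanskii96ellipsoidal}, the paper unpacks that step into an explicit computation with the value function $V_k^{[\gamma]}=\dist^2(x,\R^{[\gamma]}(\sigma(t),k))$, its Lie derivative, and the optimal versus non-optimal disturbance cases \eqref{E:dist_optimal_disturbance}--\eqref{E:dist_nonoptimal_disturbance}.
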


\begin{proof}
    We prove that safety is preserved in each mode for any given $(\gamma,x(t),t,\sigma(t))\in \M\times \X\times \mathbb{T}_k \times \mathbb{S}_k$. We will draw upon the maximal reachability results in ~\cite{kurzhanskii96ellipsoidal}.


    First, if for every $k$, $x(\underline{\theta}_k)\in K_{k-1}^\cup(P)$ (which, as we shall see, is indeed the case) then $x(\underline{\theta}_k) \in \R(\underline{\sigma}_k,k)$ since $K_{k-1}^\cup(P) \subseteq \R(\underline{\sigma}_k,k)$. Fix $\gamma=\ell_\tau\in\M$ and $k$ and let $x(\underline{\theta}_k) \in \R^{[\gamma]}(\underline{\sigma}_k,k)$. Define a continuously differentiable value function $V_{k}^{[\gamma]}\colon \X \times \mathbb{S}_k\to \Real$ such that
    \begin{equation}
        V_{k}^{[\gamma]}(x(t),\sigma(t))=\dist^2(x(t),\R^{[\gamma]}(\sigma(t),k)).
    \end{equation}
    Notice that $V_{k}^{[\gamma]}(x(t),\sigma(t)) = 0$ for $x(t) \in \inter{\R}^{[\gamma]}(\sigma(t),k)$ and $V_{k}^{[\gamma]}(x(t),\sigma(t)) \geq 0$ for $x(t) \not\in \inter{\R}^{[\gamma]}(\sigma(t),k)$. From this definition we use the convention 
    \begin{equation}\label{E:level_set_V}
            \R^{[\gamma]}(\underline{\sigma}_k,k) = \bigl\{x\in\X \mid V_{k}^{[\gamma]}(x,\underline{\sigma}_k)\leq 0\bigr\}.
    \end{equation}

    Consider the case in which $x(t)\not\in \inter{\R}^{[\gamma]}(\sigma(t),k)$ and the active mode of the automaton is $q_\text{safe}$. Clearly, we have that $\dist(x(t),\R^{[\gamma]}(\sigma(t),k))\geq 0$. Calculating the Lie derivative of $V_{k}^{[\gamma]}(x,\sigma(t))$ along the system dynamics \eqref{E:linear_ss} yields
    \begin{equation}\label{E:dVdt_synthesis_proof}
        \frac{d}{dt}V_{k}^{[\gamma]}(x(t),\sigma(t)) = 2 \dist(x(t),\R^{[\gamma]}(\sigma(t),k))
        \times \frac{d}{dt}\dist(x(t),\R^{[\gamma]}(\sigma(t),k)).
    \end{equation}
    Due to strict convexity of the sets, the distance function can be described \cite{kurzhanskii96ellipsoidal} by
    \begin{equation}\label{E:d_max_rho}
        \dist(x(t),\R^{[\gamma]}(\sigma(t),k)) = \max_{\norm{l}\leq 1} \left\{ \Inner{l,x(t)}-\rho_{\R^{[\gamma]}(\sigma(t),k)}(l)\right\}  =  \Inner{l^{[\gamma]}_{\sigma},x(t)}-\rho_{\R^{[\gamma]}(\sigma(t),k)}(l^{[\gamma]}_{\sigma}),
    \end{equation}
    where $\rho_{\A}(l):= \sup_{x\in \A} \inner{l,x}$ is the support function of the convex set $\A\subset \X$ in the direction $l\in \X$, and $l^{[\gamma]}_\sigma$ is the maximizer (unique for $\dist(x(t),\R^{[\gamma]}(\sigma(t),k)) \geq 0$) that is the direction vector determined by \eqref{E:control_l(x(t))}. Thus, following the analysis in \cite{kurzhanskii96ellipsoidal} we have that
    \begin{align}
        \frac{d}{dt}\dist(x(t),\R^{[\gamma]}(\sigma(t),k))
            &= \frac{\partial}{\partial t}\left\{ \Inner{l^{[\gamma]}_{\sigma},x(t)}-\rho_{\R^{[\gamma]}(\sigma(t),k)}(l^{[\gamma]}_{\sigma})\right\} \notag\\
                &\quad + \frac{\partial}{\partial x}\left\{ \Inner{l^{[\gamma]}_{\sigma},x(t)}-\rho_{\R^{[\gamma]}(\sigma(t),k)}(l^{[\gamma]}_{\sigma})\right\} \cdot \dot{x}(t)\\
            &= \Inner{l^{[\gamma]}_{\sigma},\dot{x}(t)} - \frac{\partial}{\partial \sigma} \rho_{\R^{[\gamma]}(\sigma(t),k)}(l^{[\gamma]}_{\sigma}) \cdot \dot{\sigma}(t) \notag\\
                &\quad + \left( l^{[\gamma]}_{\sigma} - \frac{\partial}{\partial x} \rho_{\R^{[\gamma]}(\sigma(t),k)}(l^{[\gamma]}_{\sigma})\right) \cdot \dot{x}(t)\\
            & = \Inner{l^{[\gamma]}_{\sigma},\dot{x}(t)} - \frac{\partial}{\partial \sigma} \rho_{\R^{[\gamma]}(\sigma(t),k)}(l^{[\gamma]}_{\sigma})+0 \qquad \text{($\dot{\sigma}(t)=1$ in $q_\text{safe}$)}\\
            &\leq \Inner{l^{[\gamma]}_{\sigma},Bu(t)+Gv(t)} + \rho_{B\U}(-l^{[\gamma]}_{\sigma})- \rho_{G\V}(l^{[\gamma]}_{\sigma}). \label{E:ddist/dt_for_ellipsoids_extension}
    \end{align}
    We refer the reader to \cite[Section 1.8]{kurzhanskii96ellipsoidal} for additional details, particularly on the reason why $\frac{\partial}{\partial s} \rho_{\R^{[\gamma]}(s,k)}(l) \geq \rho_{G\V}(l) - \rho_{B\U}(-l)$.

    If the disturbance plays optimally (i.e., in a worst-case fashion) then $\Inner{l^{[\gamma]}_{\sigma},Gv(t)} = \rho_{G\V}(l^{[\gamma]}_{\sigma})$ and the right hand side in \eqref{E:ddist/dt_for_ellipsoids_extension} simplifies to
    \begin{equation}\label{E:optimal_v_supportEquations}
            \Inner{l^{[\gamma]}_\sigma,Bu(t)} + \rho_{B\U}(-l^{[\gamma]}_\sigma)
            = \Inner{l^{[\gamma]}_\sigma,Bu(t)} + \sup_{x\in B\U}\Inner{-l^{[\gamma]}_\sigma,x}
            = \Inner{l^{[\gamma]}_\sigma,Bu(t)} - \inf_{x\in B\U}\Inner{l^{[\gamma]}_\sigma,x}.
    \end{equation}
    The set $B\U$ is a compact ellipsoid. Invoking \eqref{E:psi} one can verify from \eqref{E:optimal_v_supportEquations} that under optimal disturbance,
    \begin{equation}\label{E:dist_optimal_disturbance}
            \dfrac{d}{dt}\dist(x(t),\R^{[\gamma]}(\sigma(t),k)) = 0 \quad \text{for} \;\; u(t)=\psi_{B\U}(l^{[\gamma]}_\sigma).
    \end{equation}
    %
    Recall that $l^{[\gamma]}_\sigma$ is the maximizer in \eqref{E:d_max_rho} that is determined for every $t$ based on $x(t)$. The disturbance is non-anticipative since it has all of the information about the state (and thus the past values of the control) plus the current value of the control (since $u(t)$ is chosen based on a fixed $l^{[\gamma]}_\sigma$). On the other hand, the control chooses its value $u(t)=\psi_{B\U}(l^{[\gamma]}_\sigma)$ after the disturbance plays and therefore it is able to keep the derivative of the distance function zero (the derivative would be positive for any non-optimal values of $u$).
    In the general case when $v(t)$ plays non-optimally against the (optimal) control, the inequality \eqref{E:ddist/dt_for_ellipsoids_extension} implies that
    \begin{equation}\label{E:dist_nonoptimal_disturbance}
        \frac{d}{dt}\dist(x(t),\R^{[\gamma]}(\sigma(t),k)) \leq 0 \quad \text{for}\;\; u(t)=\psi_{B\U}(l^{[\gamma]}_{\sigma}).
    \end{equation}

    On the other hand, when $x(t) \in \inter{\R}^{[\gamma]}(\sigma(t),k)$ and the active mode of the controller is $q_\text{perf}$ we have that $\dist(x(t),\R^{[\gamma]}(\sigma(t),k)) = 0$. Thus the derivative of the distance function need not be examined as the Lie derivative of the value function \eqref{E:dVdt_synthesis_proof} is automatically zero regardless of the value of $u(t)$. In addition, in this mode we have that $\dot\sigma(t)\in[0,1]$ and therefore the automaton can generate an execution such that the pseudo-time does not advance.

    Combining the above we see from \eqref{E:dVdt_synthesis_proof} that with $u(t)=\hat{u}(x(t),t)\in \U_{\text{fb}}(q,\gamma,x(t),t)$,
    \begin{equation}
        \frac{d}{dt}V_{k}^{[\gamma]}(x,\sigma(t))\leq 0 \qquad \forall (q,\gamma,x,t,\sigma(t),v)\in \Q\times \M\times \X \times \mathbb{T}_k \times \mathbb{S}_k \times \V.
    \end{equation}
    %
    %
    %
    Integrating from $\underline{\theta}_k$ to $t$ in turn yields (via \eqref{E:level_set_V})
    \begin{equation}
        V_k^{[\gamma]}(x(t),\sigma(t))\leq V_k^{[\gamma]}(x(\underline{\theta}_k),\underline{\sigma}_k)\leq 0. 
    \end{equation}
    Therefore, for every solution $x(\cdot)$ of the differential inclusion $\dot{x}(t)\in Ax(t) \oplus B\U_{\text{fb}}(q,\gamma,x(t),t) \oplus G\V$ we will have
    %
    \begin{equation}
        x(t) \in \R^{[\gamma]}(\sigma(t),k) = \Reach^{\sharp[\gamma]}_{\overline{\sigma}_k-\sigma(t)}(K_k^{[\gamma]}(P),\U,\V)            \qquad \forall (t,\sigma(t)) \in \mathbb{T}_k \times \mathbb{S}_k,
    \end{equation}
    and in particular at $t=\overline{\theta}_k$, $x(\overline{\theta}_k) \in K_k^{[\gamma]}(P)$. According to \eqref{E:intermediate_Disc},
    %
    \begin{equation}\label{E:K_k_in_K}
        K_k^{[\gamma]}(P) \subseteq \bigcup_{\hat{\gamma}\in\M} K_k^{[\hat{\gamma}]}(P) =: K_k^\cup(P)
        \subseteq \Disc_{[0,\tau-\overline{\sigma}_k]}(\K,\U,\V)\subseteq \K.
    \end{equation}
    %
    Therefore, $x(\overline{\theta}_k)\in\K$ for every $k\in\{1,\dots,\abs{P}\}$ granted that the feedback control law \eqref{E:closedloop_control} is applied. It remains to show that $x(t)\in\K$ for all $t\in (\underline{\theta}_k,\overline{\theta}_k)$. Indeed, by construction of the recursion \eqref{Alg:PWE_recursion} via Proposition~\ref{P:continuous_subset_Disc} we have that for all $\sigma(t)\in \mathbb{S}_k$ and $k\in \{1,\dots,\abs{P}\}$
    \begin{equation}
        \Reach^{\sharp[\gamma]}_{\overline{\sigma}_k-\sigma(t)}(K_k^{[\gamma]}(P),\U,\V)  \subseteq \K. 
    \end{equation}
    Therefore we conclude that with $u(t)\in \U_\text{fb}$ and $x_0\in \R(0,1)$ we have $x(t)\in \K$ $\forall t\in [\underline{\theta}_1,\overline{\theta}_{\abs{P}}]$ $\forall v(\cdot) \in \Vl^\NA_{[\underline{\theta}_1,\overline{\theta}_{\abs{P}}]}$. Note that $\underline{\theta}_1 = \underline{\sigma}_1 = 0$ and $\overline{\theta}_{\abs{P}}\geq \tau$. Let $T:= \overline{\theta}_{\abs{P}}$ and recall that $\sigma(\overline{\theta}_{\abs{P}})= \overline{\sigma}_{\abs{P}}=\tau$. Thus $x(t)\in \K$ $\forall t\in [0,T]$ $\forall v(\cdot) \in \Vl^\NA_{[0,T]}$.
\end{proof}

\subsection{Remarks \& Other Considerations}\label{S:control_remarksSection}


\subsubsection{Respecting the Intermediate Kernels}
The strategy \eqref{E:closedloop_control} will ensure that the closed-loop system evolves in the interior or else on the boundary of $\R(\sigma(t),k)$ for all $(t,\sigma(t))\in \mathbb{T}_k \times \mathbb{S}_k$ and $k\in\{1,\dots,\abs{P}\}$. Thus, as we have seen above, for a given $P\in\mathscr{P}([0,\tau])$ even if the disturbance always plays a worst-case game the trajectories satisfy $x(\overline{\theta}_k)\in K^\cup_{k}(P) \subseteq \Disc_{[0,\tau-\sigma(\overline{\theta}_k)]}(\K,\U,\V)$ for all $k$. 

\subsubsection{Shared Boundary Points}
In practice for common points that are on the boundaries of two (or more) $\ell_\tau$-parameterized ellipsoids $\Reach^{\sharp[\ell_\tau]}_{\overline{\sigma}_k-\sigma(t)}(K_k^{[\ell_\tau]}(P),\U,\V)$ at any pseudo-time instant $\sigma(t)$, any one of these ellipsoids can be used for the computation of $l_\sigma^{[\ell_\tau]}$ in \eqref{E:control_l(x(t))} and its corresponding optimal control law $\psi_{B\U}(l_\sigma^{[\ell_\tau]})$ in mode $q_\text{safe}$. Any such control will ensure that the trajectory remains within the corresponding robust maximal reachable tube while being steered towards $K_k^{[\ell_\tau]}(P)$ (which, as we have seen, will ultimately ensure constraint satisfaction and safety).

To allow a more permissive control policy, the automaton $H$ could be modified by adding a self-loop in mode $q_\text{safe}$ (i.e., an edge $(q_\text{safe},q_\text{safe})$) so that among all possible safe control laws corresponding to those ellipsoids that share the common boundary point, one that better satisfies a given performance criterion could be selected.

\subsubsection{Conservatism of $K_0^\cup(P)$ \& Synthesis for $x_0 \not\in K_0^\cup(P)$}\label{S:control_notinK0star}
When the initial condition $x_0$ lies inside the true discriminating kernel but outside its piecewise ellipsoidal under-approximation, i.e.\ $x_0 \in \Disc_{[0,\tau]}(\K,\U,\V)\backslash K_0^\cup(P)$ for a given $P\in \mathscr{P}([0,\tau])$, the control policy \eqref{E:closedloop_control} can no longer guarantee safety. 
However, with the following modification the feedback law can often keep the trajectory within $\K$ in practice:  For every $t\in [0,T]$ such that $x(t) \not \in \R(\sigma(t),k)$ with $x(0) = x_0$ and $\sigma(t) \in \mathbb{S}_k$, the direction vector $l_\sigma^{[\ell_\tau]}$ in \eqref{E:control_l(x(t))} is modified to be the direction that corresponds to any $\ell_\tau \in \M$ for which $\dist(x(t),\Reach^{\sharp[\ell_\tau]}_{\overline{\sigma}_k-\sigma(t)}(K_k^{[\ell_\tau]}(P),\U,\V))$ is minimized; See ~\cite[Section 6.3]{kaynama_thesis} for an example.


Finally, note that for $x_0\not \in \Disc_{[0,\tau]}(\K,\U,\V)$ the modified control law described above \emph{may} still be able to keep $x$ in $\K$ over at least $[0,\tau]$ if the disturbance does not play optimally against the control. On the other hand, in the deterministic case ($\V=\{0\}$), for $x_0\not \in \Viab_{[0,\tau]}(\K,\U)$ there does not exist a control law that can keep the trajectory within $\K$ over $[0,\tau]$. 

\subsubsection{Pseudo-Time and Conservatism}

By introducing the notion of pseudo-time in $H$ the objective of the controller becomes two-fold: 1) to preserve safety by applying a particular optimal control law when (and only when) safety is at stake, and 2) to prolong the duration for which safety can be preserved. This second objective may cause the closed-loop trajectories to be driven towards the boundary of the domain in $q_\text{perf}$. Therefore the overall behavior of the system becomes conservative, particularly if pseudo-time is always frozen in $q_\text{perf}$; the automaton may spend a significant amount of (global) time close to the boundaries of the domains of both modes. Even if chattering can be avoided, this phenomenon makes the safety mode active more often than would have been necessary if the controller were only to pursue the first objective. In such a case, the second objective may be relaxed by choosing $\dot{\sigma}(t)=1$ in $q_\text{perf}$ in favor of performance. Safety is still preserved over at least $[0,\tau]$ with this choice.

\subsubsection{Smooth Transition Between Safety and Performance}\label{S:smooth}

When the mode $q_\text{perf}$ of the controller is active, the strategy $\U_\text{fb}(q_\text{perf},\gamma,x(t),t)=\U$ for some $(\gamma,x(t),t) \in \InpIn \times \InpEx$ and $\sigma(t)\in [0,\tau]$ is returned and any control input in $\U$ can be applied to the system; denote this input as $u_\text{perf}(t)$. On the other hand, when the mode $q_\text{safe}$ is active, the controller returns $\U_\text{fb}(q_\text{safe},\gamma,x(t),t)=\psi_{B\U}(l_\sigma^{[\gamma]})$ and this specific safety-preserving control law must be applied to the system to ensure safety; we denote this input as $u_\text{safe}(t)$.

Choosing $u_\text{perf}$ arbitrarily without considering the main goal of the closed-loop system (preserving safety) may result in excessive switching between the two modes of the controller (whose priority is to preserve safety over performance). Thus the resulting control policy could have a high switching frequency and the controller could end up spending a significant amount of time at the extremum points of the input constraint set. Such a policy may be hard on actuators due to chattering\footnote{We define chattering of a signal or automaton as frequent (but finite) jumps in the signal value or automaton mode over a given time interval.} or Zeno behavior. These pathological cases are usually avoided by imposing certain conditions such as dwell-time \cite{Morse1996} (requiring that the automaton remains in each mode for a non-zero amount of time) or other techniques as discussed in, for example \cite{Johansson1999,Asarin_Bournez_Dang_Maler_Pnueli_2000,filippov1988differential}, but none of these approaches work with $H$ to ensure safety.



Instead, we propose the following heuristic to reduce switching frequency while maintaining safety: An ideal control policy in mode $q_\text{perf}$ should be a combination of both $u_\text{perf}$ and $u_\text{safe}$, even though technically the safety component $u_\text{safe}$ is not needed when the controller is in mode $q_\text{perf}$. We use a simple convex combination of these two inputs: For every $(\gamma,x(t),t)\in \InpIn \times \InpEx$, $\sigma(t)\in[0,\tau]$ and a given domain $\inter{\R}^{[\gamma]}(\sigma(t),k)=\inter{\E}(x^{c[\gamma]}_k(\sigma(t)-\underline{\sigma}_k),X^{[\gamma]}_{\ell,k}(\sigma(t)-\underline{\sigma}_k))$ in $q_\text{perf}$ we shall choose an input such that
\begin{equation}\label{E:smooth1}
        \begin{split}
        u(t) &= \left(1-\beta_\alpha(\phi^{[\gamma]}(x(t),\sigma(t)))\right) u_\text{perf}(t) + \beta_\alpha(\phi^{[\gamma]}(x(t),\sigma(t))) u_\text{safe}(t), 
        \end{split}
\end{equation}
with
\begin{equation}\label{E:Beta_intermsof_phi_alpha}
    \beta_\alpha(\xi) :=
    \begin{cases}
        1 & \text{if \; $\xi \geq 1$};\\
        \frac{1}{1-\alpha}(\xi-\alpha) & \text{if \; $\alpha \leq \xi < 1$};\\
        0 & \text{if \; $\xi < \alpha$},
    \end{cases}
\end{equation}
where $\alpha \in [0,1)$ is a design parameter (Fig.~\ref{F:beta}), and
\begin{align}
    \phi^{[\gamma]} \colon \X \times [0,\tau] &\to \Real^+,\\
    (x(t),\sigma(t)) &\mapsto
    \begin{multlined}[t][6cm]
          \Inner{(x(t)-x^{c[\gamma]}_k(\sigma(t)-\underline{\sigma}_k)),\\ (X^{[\gamma]}_{\ell,k}(\sigma(t)-\underline{\sigma}_k))^{-1} (x(t)-x^{c[\gamma]}_k(\sigma(t)-\underline{\sigma}_k))}.
    \end{multlined}
\end{align}
In $q_\text{perf}$ for fixed $\gamma$ and $\sigma(t)$ we have that $\operatorname{dom}(\phi^{[\gamma]}(\cdot,\sigma(t)))=\inter{\R}^{[\gamma]}(\sigma(t),k)$. Therefore, $\operatorname{range}(\phi^{[\gamma]}(\cdot,\sigma(t)))=[0,1)$. This is true simply because the set $\R^{[\gamma]}(\sigma(t),k)$ is an ellipsoid, and therefore by definition the one sub-level sets of the function $\phi^{[\gamma]}(\cdot,\sigma(t))$ in $q_\text{perf}$ form the interior of $\R^{[\gamma]}(\sigma(t),k)$. That is, $\inter{\R}^{[\gamma]}(\sigma(t),k) = \{x\in\X \mid \phi^{[\gamma]}(x,\sigma(t)) <1 \}$.
\begin{figure}[t]
    \centering
    \scalebox{.5}{\input{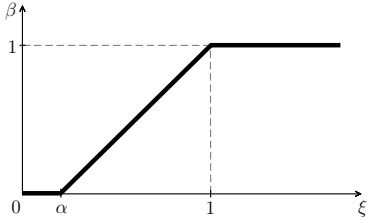}}
    \caption{$\beta_\alpha$ as a function of its argument for a given design parameter $\alpha \in [0,1)$.}
    \label{F:beta}
\end{figure}

Notice that $\phi^{[\gamma]}(x(t),\sigma(t))$ determines how ``deep'' inside the domain of $q_\text{perf}$ the trajectory is at time $t$ by evaluating the implicit surface function of the ellipsoid $\R^{[\gamma]}(\sigma(t),k) = \cl \Dom(q_\text{perf},\gamma,\sigma(t))$ at $x(t)$. While any distance measure can be employed, we use the readily-available definition of ellipsoid to simplify the analysis and implementation. The closer the trajectory is to the boundary of $\R^{[\gamma]}(\sigma(t),k)$ the greater the value of $\beta_\alpha(\phi^{[\gamma]}(x(t),\sigma(t)))$ and therefore the more pronounced the safety component $u_\text{safe}(t)$ of the input will be. On the other hand, if the trajectory is deep inside the domain of $q_\text{perf}$, $\phi^{[\gamma]}(x(t),\sigma(t))$ tends to $\alpha$ or less, and therefore $\beta_\alpha(\phi^{[\gamma]}(x(t),\sigma(t)))$ goes to zero. As a result, a greater emphasis is given to the performance component $u_\text{perf}(t)$ of the input. As such, the parameter $\alpha$ determines where within the domain of $q_\text{perf}$ the safety component $u_\text{safe}(t)$ should begin to kick in. Clearly, larger values of $\alpha$ imply more emphasis on performance, while smaller values yield a smoother transition between $q_\text{perf}$ and $q_\text{safe}$.



\begin{prop}\label{P:continuous_u}
    The modified control law \eqref{E:smooth1} is continuous across the automaton's transition on $(q_\text{\upshape perf},q_\text{\upshape safe})$.
\end{prop}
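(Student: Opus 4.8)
The plan is to show that at the very instant the automaton executes the $(q_\text{perf},q_\text{safe})$ transition, the convex combination \eqref{E:smooth1} evaluated on the $q_\text{perf}$ side already coincides with the pure safety law $u_\text{safe}(t)=\psi_{B\U}(l^{[\gamma]}_\sigma)$ applied in $q_\text{safe}$; continuity of the resulting signal then follows from continuity of the constituent maps near that instant.

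First I would pin down the transition instant. While the automaton is in $q_\text{perf}$ with active direction $\gamma$, the continuous state is confined to the open invariant $\Dom(q_\text{perf},\gamma,\sigma(t))=\inter{\R}^{[\gamma]}(\sigma(t),k)=\{x\in\X\mid \phi^{[\gamma]}(x,\sigma(t))<1\}$, using the interior characterization noted just before the proposition. Hence the edge $(q_\text{perf},q_\text{safe})$ can fire only on the boundary of this invariant: at the transition time $t^\ast$ one has $x(t^\ast)\in\partial\R^{[\gamma]}(\sigma(t^\ast),k)$, equivalently $\phi^{[\gamma]}(x(t^\ast),\sigma(t^\ast))=1$. (The guard $(\inter{\R}(\sigma(t),k))^c$ only records that no other ellipsoid interior is available to permit a $(q_\text{perf},q_\text{perf})$ direction switch; what forces the exit to $q_\text{safe}$ is the violation of the invariant $\inter{\R}^{[\gamma]}$, so $\phi^{[\gamma]}=1$ there rather than some exterior value.)

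Second, I would evaluate the blending weight at the boundary. From the piecewise definition \eqref{E:Beta_intermsof_phi_alpha}, $\beta_\alpha(1)=1$, so substituting $\phi^{[\gamma]}(x(t^\ast),\sigma(t^\ast))=1$ into \eqref{E:smooth1} collapses the convex combination to $u(t^\ast)=(1-1)\,u_\text{perf}(t^\ast)+1\cdot u_\text{safe}(t^\ast)=u_\text{safe}(t^\ast)$. Because the reset preserves the active direction, $R(q_\text{perf},q_\text{safe},\gamma)=\gamma$, the safety law on the $q_\text{safe}$ side is computed from the very same $l^{[\gamma]}_\sigma$ of \eqref{E:control_l(x(t))}, so the value immediately after the switch equals $u_\text{safe}(t^\ast)$ as well, and the one-sided values agree. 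To upgrade this agreement to continuity of the signal, I would invoke continuity of each ingredient in a neighborhood of $t^\ast$: the trajectory $x(\cdot)$ does not jump (the reset acts only on the internal input $\gamma$); the center $x^{c[\gamma]}_k(\cdot)$ and shape $X^{[\gamma]}_{\ell,k}(\cdot)$ are continuous solutions of the ellipsoidal ODEs, so $\phi^{[\gamma]}$ and hence $\beta_\alpha\circ\phi^{[\gamma]}$ are continuous (the map $\beta_\alpha$ is continuous, its branches matching at $\xi=\alpha$ and $\xi=1$); and $u_\text{safe}=\psi_{B\U}(l^{[\gamma]}_\sigma)$ is continuous there. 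Note that $u_\text{perf}$ itself need not be continuous, since its coefficient $1-\beta_\alpha(\phi^{[\gamma]})$ tends to $0$ continuously while $u_\text{perf}\in\U$ stays bounded, so that term vanishes at $t^\ast$ regardless.

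The main obstacle I expect is the well-definedness of $\psi_{B\U}$ exactly at $t^\ast$: formula \eqref{E:psi} contains the factor $\langle l^{[\gamma]}_\sigma, BU\tr{B}l^{[\gamma]}_\sigma\rangle^{-1/2}$, which is singular at $l^{[\gamma]}_\sigma=0$. I would rule this out by observing that on $\partial\R^{[\gamma]}$ the state $x(t^\ast)$ is distinct from the ellipsoid center $x^{c[\gamma]}_k(\sigma(t^\ast)-\underline{\sigma}_k)$, so $l^{[\gamma]}_\sigma=(X^{[\gamma]}_{\ell,k})^{-1}(x-x^{c[\gamma]}_k)\neq 0$ because $X^{[\gamma]}_{\ell,k}$ is positive definite; hence the factor is finite and $\psi_{B\U}$ is continuous in a neighborhood of the boundary. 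The remaining work is pure bookkeeping, namely confirming the guard/invariant semantics that force $\phi^{[\gamma]}=1$ at the switch, which is precisely the interior characterization of the domain used above.
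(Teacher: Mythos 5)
Your proposal is correct and follows essentially the same route as the paper's proof: the transition fires on $\partial\R^{[\gamma]}(\sigma(\hat t),k)$ where $\phi^{[\gamma]}=1$, so $\beta_\alpha=1$ collapses the convex combination \eqref{E:smooth1} to $u_\text{safe}$, matching the value applied in $q_\text{safe}$ since the reset preserves $\gamma$. Your added checks---boundedness of $u_\text{perf}$ sufficing because its coefficient vanishes, and non-singularity of $\psi_{B\U}$ via $l^{[\gamma]}_\sigma\neq 0$ on the boundary (positive definiteness of $X^{[\gamma]}_{\ell,k}$)---are careful refinements of the same argument, not a different approach.
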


\begin{proof}
    For a given $\gamma$ and $k$, let $\hat{t}$ be the global time instant at which a transition to $q_\text{safe}$ occurs so that $x(\hat{t}) \in \partial\R^{[\gamma]}(\sigma(\hat{t}),k)$. We have that $\lim_{t\to \hat{t}} \beta_\alpha(\phi^{[\gamma]}(x(t),\sigma(t)))= \lim_{t\to \hat{t}}\phi^{[\gamma]}(x(t),\sigma(t)) =1$. Therefore,
    $\lim_{t \nearrow \hat{t}} u(t) =  (1-1) u_{\text{perf}}(\hat{t}) + u_{\text{safe}}(\hat{t})  = u_{\text{safe}}(\hat{t}) = \lim_{t \searrow \hat{t}} u(t)$.
\end{proof}

The same result holds for transitions on $(q_\text{safe},q_\text{perf})$ if $R(q_\text{safe},q_\text{perf},\gamma) = \gamma$. Such a policy will ensure a gradual change of the control law from one form to the other, resulting in less switching frequency between performance and safety (see \cite[Section~6.3]{kaynama_thesis} for further detail).  If the disturbance plays optimally (or there is no disturbance), we have two scenarios: (i) Exact arithmetic: Neither the modified nor the unmodified control policy causes chattering in the automaton. With the unmodified policy, as soon as the trajectory reaches the switching surface, the optimal control law in $q_\text{safe}$ keeps the vector field tangential to the surface via \eqref{E:dist_optimal_disturbance} (until it enters the interior of an ellipsoid corresponding to a different terminal direction). With the modified policy \eqref{E:smooth1}, the trajectory approaches the switching surface asymptotically; therefore, the automaton never has to transition to $q_\text{safe}$. (ii) Inexact arithmetic (e.g.\ due to floating point or truncation error in a simulation): The unmodified policy will likely cause the automaton as well as the control signal to chatter since small arithmetic errors can force the state back and forth across the switching surface.  With the modified policy the automaton is far less likely to chatter since the trajectory is only asymptotically approaching the switching surface and therefore the arithmetic errors necessary to cause it to jump across the switching surface would have to be larger. Furthermore, the control signal $u(\cdot)$ in this case is continuous across the switching surface and therefore the only discontinuities it can have would be due to discontinuities in $u_\text{perf}(\cdot)$ or changes in $\gamma$.

If the disturbance is allowed to play non-optimally (in the sense of safety), then it can choose to switch back and forth between driving the trajectory inward and outward with respect to the current ellipsoid.  In this case the unmodified control policy can be tricked into chattering (both in mode and control signal) as fast as the disturbance itself is willing to switch. In contrast, the modified policy will still generate a trajectory which is at worst asymptotically approaching the switching surface, and the non-optimal segments of the disturbance signal will only serve to keep the trajectory further from the switching surface via \eqref{E:dist_nonoptimal_disturbance}. Consequently, the conclusions of the previous paragraph for the modified policy also apply in this case.

\subsection{Infinite-Horizon Control}



Suppose that the initial time $s$ in the definition of $\sigma$ in \eqref{E:pseudotime_defn} is allowed to vary discontinuously. In such a case, $\sigma(\cdot)$ is only \emph{piecewise} continuous and non-decreasing. Assume that the controlled-invariance condition \eqref{E:control_inv_cond} in Proposition~\ref{P:control_inv_cond} holds for some $k\in \{1,\dots,\abs{P}\}$ in at least one terminal direction $\ell_\tau \in \M$. Denote the subset of all such terminal directions as $\M_\infty \subseteq \M$. Now, consider the modified hybrid automaton $\widetilde H$ depicted in Fig.\ ~\ref{F:automatonII}.
\begin{figure*}[t]
    \centering
    \scalebox{.63}{\input{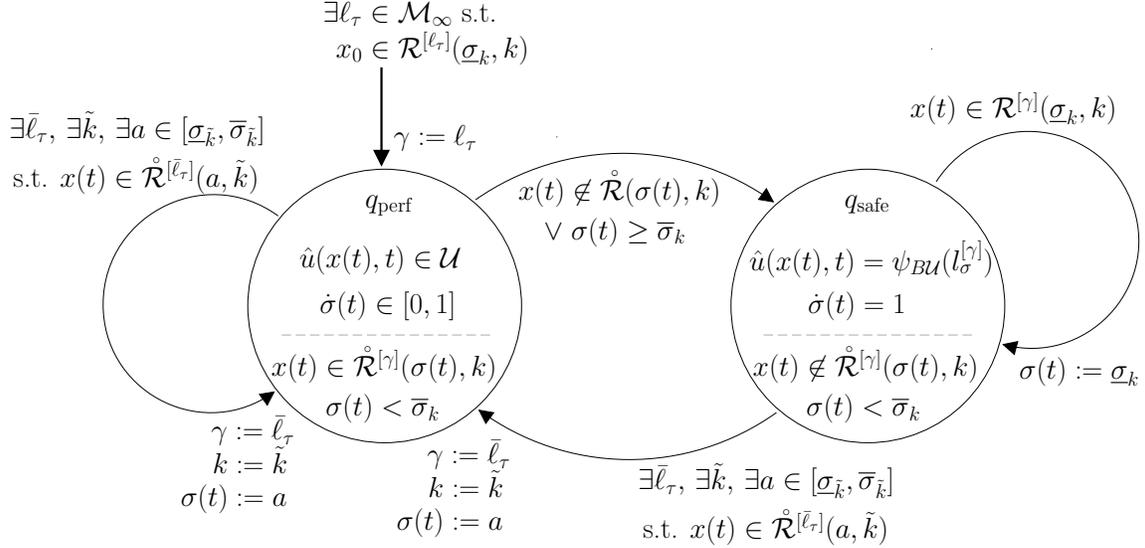}}
    \caption{The graph of the hybrid automaton $\widetilde H$ representing the infinite-horizon safety-preserving controller in Proposition~\ref{P:infinite_horizon_control}.}
    \label{F:automatonII}
\end{figure*}
We claim that $\widetilde H$ can be used to synthesize infinite-horizon safety-preserving control laws.

Before we prove this claim, let us describe a few components of $\widetilde H$: The initial state is now $\Init = (q_\text{perf}, x_0, 0, \{\gamma \mid \exists \ell_\tau \in \M_\infty,\, x_0 \in \R^{[\ell_\tau]}(\underline{\sigma}_k,k),\, \gamma=\ell_\tau \})$ where $k$ for a given $\ell_\tau$ corresponds to the first sub-interval for which the condition \eqref{E:control_inv_cond} has been satisfied. The edge $(q_\text{safe}, q_\text{safe})$ resets the pseudo-time downward via $(q_\text{safe}, q_\text{safe}, \sigma(t))\mapsto \underline{\sigma}_k$ and is enabled whenever the trajectory reaches the closure of the set $\R^{[\gamma]}(\underline{\sigma}_k,k)$ for a given $\gamma$ regardless of the value of $\sigma(t)$. The guard on $(q_\text{perf},q_\text{safe})$ is altered so as to enable a transition to $q_\text{safe}$ when $x(t)\not\in \inter{\R}^{[\gamma]}(\sigma(t),k)$ or $\sigma(t)\geq \overline{\sigma}_k$, or both. As we shall see shortly, when $\sigma(t)\geq \overline{\sigma}_k$ the automaton \emph{will} transition on $(q_\text{safe}, q_\text{safe})$ due to Proposition~\ref{P:control_inv_cond}; thus, $\widetilde{H}$ is non-blocking with respect to this condition. Finally, the guards on $(q_\text{perf},q_\text{perf})$ and $(q_\text{safe},q_\text{perf})$ are also modified so as to additionally permit transitions between any two distinct reachable tube segments corresponding to different directions in $\M_\infty$, or to allow jumps between the reachable sets within the same reachable tube segment.

With the above formulation we no longer require access to all intermediate maximal reachable tubes that would have been computed offline during the analysis phase. Instead, for every $\ell_\tau \in \M_\infty$ we only need the reachable tube over the sub-interval $[t_{k-1},t_k]=:[\underline{\sigma}_k,\overline{\sigma}_k]$ with $k$ being the first integer for which the invariance condition holds.

\begin{prop}[Infinite-Horizon Control]\label{P:infinite_horizon_control}
    Suppose the controlled-invariance condition in Proposition~\ref{P:control_inv_cond} holds. For any disturbance $v(\cdot)\in \Vl^\NA_{\Real^+}$, the feedback strategy generated by $\widetilde H$ can keep the trajectory $x(\cdot)$ of the system \eqref{E:linear_ss} with $x(0)=x_0$ contained in $\K$ for all time $t \geq 0$.
\end{prop}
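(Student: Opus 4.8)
The plan is to reduce the claim to the controlled-invariance already furnished by Proposition~\ref{P:control_inv_cond}, reading the self-loop $(q_\text{safe},q_\text{safe})$ of $\widetilde H$ as a concrete realization of the recursive time-reset used in that proposition's proof. First I would fix a direction $\gamma=\ell_\tau\in\M_\infty$ together with the first index $k$ for which the controlled-invariance condition \eqref{E:control_inv_cond} holds, and write $\R_\text{inv}=\bigcup_{\sigma\in\mathbb{S}_k}\R^{[\gamma]}(\sigma,k)$. By Proposition~\ref{P:control_inv_cond} this tube is robust controlled-invariant, and from the argument in the proof of Theorem~\ref{Thm:safety_preserving_control} each slice satisfies $\R^{[\gamma]}(\sigma,k)=\Reach^{\sharp[\gamma]}_{\overline{\sigma}_k-\sigma}(K_k^{[\gamma]}(P),\U,\V)\subseteq\K$, so that $\R_\text{inv}\subseteq\K$ and $x_0\in\R^{[\gamma]}(\underline{\sigma}_k,k)\subseteq\R_\text{inv}$. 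It therefore suffices to exhibit an execution of $\widetilde H$ that keeps $x(t)\in\R^{[\gamma]}(\sigma(t),k)$ for all $t\ge 0$, where $\sigma(\cdot)$ is now only piecewise continuous because of the downward resets.

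Within a single pass of the pseudo-time across $\mathbb{S}_k$ the required invariance is exactly the content of the proof of Theorem~\ref{Thm:safety_preserving_control}: with the safe law $\hat u=\psi_{B\U}(l^{[\gamma]}_\sigma)$ active in $q_\text{safe}$ (where $\dot\sigma=1$) and with any $\hat u\in\U$ admissible in $q_\text{perf}$ (where $x\in\inter{\R}^{[\gamma]}(\sigma,k)$ makes $V_k^{[\gamma]}$ vanish), the value function $V_k^{[\gamma]}(x,\sigma)=\dist^2(x,\R^{[\gamma]}(\sigma,k))$ is non-increasing along the closed-loop dynamics of \eqref{E:linear_ss} even against an optimally playing disturbance. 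Hence $x(t)\in\R^{[\gamma]}(\sigma(t),k)$ throughout the pass, and when $\sigma$ attains $\overline{\sigma}_k$ we obtain $x\in\R^{[\gamma]}(\overline{\sigma}_k,k)=K_k^{[\gamma]}(P)$.

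The crux of the extension—and the step I expect to be the main obstacle—is showing that $\widetilde H$ never blocks and that the downward reset is legitimate. By the controlled-invariance condition \eqref{E:control_inv_cond} we have $K_k^{[\gamma]}(P)\subseteq\Reach^{\sharp[\gamma]}_{\overline{\sigma}_k-\underline{\sigma}_k}(K_k^{[\gamma]}(P),\U,\V)=\R^{[\gamma]}(\underline{\sigma}_k,k)$; consequently, precisely when $\sigma$ reaches $\overline{\sigma}_k$ the state lies in $\cl\R^{[\gamma]}(\underline{\sigma}_k,k)$, which is the guard of the self-loop $(q_\text{safe},q_\text{safe})$. This self-loop fires and resets $\sigma:=\underline{\sigma}_k$ while leaving $x$ unchanged, so the invariant $x\in\R^{[\gamma]}(\sigma,k)$ is re-established for the new pseudo-time value and the argument of the preceding paragraph applies verbatim to the next pass. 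Care must be taken here to reconcile the discontinuity of $\sigma$ at the reset with the continuity of $x$ and with the value-function bound; both are consistent because the reset only decreases $\sigma$ to a slice whose reachable set already contains $x$.

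Finally, I would construct one concrete safety-preserving execution by keeping $\widetilde H$ in $q_\text{safe}$ and letting the self-loop fire each time $\sigma$ attains $\overline{\sigma}_k$, exactly mirroring the recursive reset in the proof of Proposition~\ref{P:control_inv_cond}. Since $\dot\sigma=1$ in $q_\text{safe}$, each such cycle consumes a global time $\overline{\sigma}_k-\underline{\sigma}_k=:\delta>0$ and returns the state to $\R^{[\gamma]}(\underline{\sigma}_k,k)$; hence $N$ cycles consume exactly $N\delta$ of global time, which diverges as $N\to\infty$ and precludes any Zeno obstruction. Throughout this execution $x(t)\in\R^{[\gamma]}(\sigma(t),k)\subseteq\R_\text{inv}\subseteq\K$, so $x(t)\in\K$ for all $t\ge 0$. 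The remaining non-deterministic freedom to dwell in $q_\text{perf}$ or to jump between directions in $\M_\infty$ only relocates the state among tubes that all lie in $\K$, so the conclusion is unaffected.
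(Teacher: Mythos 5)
Your proposal is correct and follows essentially the same route as the paper's proof: both rest on the per-pass value-function argument of Theorem~\ref{Thm:safety_preserving_control} together with the inclusion $K_k^{[\gamma]}(P)\subseteq\R^{[\gamma]}(\underline{\sigma}_k,k)$ supplied by \eqref{E:control_inv_cond} to show the self-loop guard $(q_\text{safe},q_\text{safe})$ is enabled whenever $\sigma$ reaches $\overline{\sigma}_k$, the paper merely packaging this as a proof by contradiction (with its three boundary scenarios) where you argue directly by induction on reset cycles with an explicit non-Zeno count. One small caveat: your ``concrete execution that stays in $q_\text{safe}$'' need not be admissible, since the domain of $q_\text{safe}$ is $(\inter{\R}^{[\gamma]}(\sigma(t),k))^c$ and a sub-optimally playing disturbance can push the state strictly inside the slice via \eqref{E:dist_nonoptimal_disturbance}, forcing a transition to $q_\text{perf}$ --- but your closing remark already covers that case, and your non-Zeno bound survives because $\dot{\sigma}(t)\leq 1$ in both modes makes every reset cycle consume global time at least $\overline{\sigma}_k-\underline{\sigma}_k$.
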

\begin{proof}
    Fix $\gamma=\ell_\tau$ and $k$ according to $\Init$. It follows from Theorem~\ref{Thm:safety_preserving_control} and Proposition~\ref{P:control_inv_cond} that the control law generated by $\widetilde{H}$ is safety-preserving over $[\underline{\sigma}_k, \overline{\sigma}_k]$.
    To prove $\widetilde H$ generates an infinite-horizon control strategy, it is sufficient to show that for any $x_0 \in \R^{[\gamma]}(\underline{\sigma}_k,k)$, the pseudo-time $\sigma(t)$ cannot exceed $\overline{\sigma}_k$ and is repeatedly reset to $\underline{\sigma}_k$. We proceed by arriving at a contradiction: Assume that there does not exist such a strategy. This implies that the automaton can never transition on $(q_\text{safe},q_\text{safe})$. Now, begin in $q_\text{perf}$ using $\Init$. The automaton remains in $q_\text{perf}$ as long as $\sigma(t) < \overline{\sigma}_k$ and $x(t) \in \inter{\R}^{[\gamma]}(\sigma(t),k)$ during which time we have $\dot{\sigma}(t) \in [0,1]$. Thus, the trajectory $x(\cdot)$ will either eventually escape the interior of $\R^{[\gamma]}(\sigma(t),k)$ for all $\gamma$ for some $\sigma(t)\in [\underline{\sigma}_k,\overline{\sigma}_k)$, or $\sigma(t)=\overline{\sigma}_k$, or both. In all cases the automaton transitions from $q_\text{perf}$ to $q_\text{safe}$ where $\dot{\sigma}(t)=1$. At this point we have three scenarios: (i) If $\sigma(t)<\overline{\sigma}_k$ and $x(t) \in (\inter{\R}^{[\gamma]}(\sigma(t),k))^c$ the optimal control law in $q_\text{safe}$ steers the system towards $\R^{[\gamma]}(\overline{\sigma}_k,k)$. As such, $\sigma(t)$ advances until $\sigma(t)=\overline{\sigma}_k$ and $x(t)\in \R^{[\gamma]}(\overline{\sigma}_k,k)$. (ii) If $\sigma(t) = \overline{\sigma}_k$ and $x(t) \in \inter{\R}^{[\gamma]}(\sigma(t),k)$, then trivially $x(t) \in \inter{\R}^{[\gamma]}(\overline{\sigma}_k,k) \subset \R^{[\gamma]}(\overline{\sigma}_k,k)$. (iii) If $\sigma(t) = \overline{\sigma}_k$ and $x(t) \in (\inter{\R}^{[\gamma]}(\sigma(t),k))^c$ then it must be that $x(t) \in \partial \R^{[\gamma]}(\overline{\sigma}_k,k) \subset \R^{[\gamma]}(\overline{\sigma}_k,k)$. In all three scenarios the domain of $q_\text{safe}$ is no longer satisfied. But since via the invariance condition \eqref{E:control_inv_cond} we have $\R^{[\gamma]}(\overline{\sigma}_k,k) = K_k^{[\gamma]}(P) \subseteq \R^{[\gamma]}(\underline{\sigma}_k,k) = \Reach^{\sharp[\gamma]}_{\overline{\sigma}_k-\underline{\sigma}_k}(K_k^{[\gamma]}(P),\U,\V)$  (recall the shorthand notation \eqref{E:shorthand_notation_R}), then $x(t) \in \R^{[\gamma]}(\underline{\sigma}_k,k)$ and therefore the guard on $(q_\text{safe},q_\text{safe})$ is enabled, allowing $\widetilde H$ to transition on this edge and $\sigma(t)$ is reset to $\underline{\sigma}_k$.
\end{proof}

\paragraph{Example}
Consider a trivial planar system
\begin{equation}
    \dot{x} = \begin{bmatrix}
        \p0 & 2\\
        -2 & 0
    \end{bmatrix} x +
    \begin{bmatrix}
        1\\
        0.5
    \end{bmatrix} u +
    \begin{bmatrix}
        1\\
        1
    \end{bmatrix} v.
\end{equation}
Suppose $\K:=\E\left(\left[\begin{smallmatrix}
  0\\
  0
\end{smallmatrix}\right], \, \left[\begin{smallmatrix}
   0.25 &  0\\
     0  &  4
\end{smallmatrix}\right] \right)$, $\U:=[-1,1]$, and $\V:=[-0.1,0.1]$. For simplicity of demonstration, we use a single terminal direction $\ell_\tau = \tr{\begin{bmatrix} 1  &  1 \end{bmatrix}}$ in the offline approximation algorithm although the same analysis can be performed using multiple terminal directions. We only run the analysis algorithm over the time interval $[0,1]$ using a uniform partitioning with $\abs{P}=100$. The condition in Proposition~\ref{P:control_inv_cond} is satisfied at the end of the $94$-th sub-interval ($k=6$) at which point we terminate the algorithm. We then discard all computed intermediate maximal reachable tubes except for the tube over the sub-interval where the controlled-invariance condition holds, and proceed to execute the hybrid controller $\widetilde H$ using only this tube. For simulation purposes we consider an initial condition $x_0 = \tr{\begin{bmatrix} 0.3  &  -0.7 \end{bmatrix}}$ and assume that the disturbance (unknown to the controller) is a random signal with uniform distribution on $\V$, i.e.\ $v(t) \sim \operatorname{uniform}(-0.1,0.1)$. Since $u$ in $q_\text{perf}$ can be chosen arbitrarily in $\U$, we simply apply $u=\min(\U)=-1$ in this mode. The results are shown in Figs.\ ~\ref{F:infty} and ~\ref{F:infty_ctrl}. We confirm that (a) the sub-interval reachable tube is robust controlled-invariant, and (b) the controller is capable of preserving safety over an \emph{infinite} horizon (although we manually terminate the simulation at $t=25\, \text{s}$). The closed-loop trajectory converges to a nearly periodic orbit at about $t\approx 2.5\, \text{s}$ after which the controller enters $q_\text{safe}$ only occasionally so as to maintain safety and invariance. Chattering is present, but the modified ``smoothing'' policy discussed in Section~\ref{S:smooth} could be employed if such behavior is undesirable.
\begin{figure}[t]
  \centering
    \begin{minipage}[t]{0.48\linewidth}
      \centering
        \scalebox{.85}{%
        \begin{lpic}[l(3mm),r(0mm),t(0mm),b(3mm),draft,clean]{infty2(.55)}
            \lbl[t]{55,-1;$x_1$}
            \lbl[lb]{2.5,48,90;$x_2$}
            \lbl[lb]{65,36;$x_0$}
        \end{lpic}%
        }
        \caption{The infinite-horizon safety-preserving control of a planar system using $\widetilde H$. The simulation is manually terminated at $t=25\, \text{s}$.  The constraint set $\K$ (red/dark) and the sub-interval maximal reach tube (green/light) for a single terminal direction are shown. The synthesized control law renders the tube robust controlled-invariant over an infinite horizon.}
        \label{F:infty}
    \end{minipage}\hfill
    \begin{minipage}[t]{0.48\linewidth}
      \centering
        \scalebox{.85}{%
        \begin{lpic}[l(3mm),r(0mm),t(0mm),b(3mm),draft,clean]{infty2_ctrl(.6)}
            \lbl[lb]{2.5,23,90;$u$}
            \lbl[lb]{65,-2;$t$}
        \end{lpic}%
        }
        \caption{Feedback control law corresponding to Fig.~\ref{F:infty}. The high switching frequency, particularly at the beginning portion of the signal, is due to chattering (which can be avoided using the modified policy in Section~\ref{S:smooth}, if desired).}
        \label{F:infty_ctrl}
    \end{minipage}
\end{figure}

\section{Example Application}\label{S:Application}

Consider the twelve-dimensional linearized model of an agile quadrotor unmanned aerial vehicle (UAV) in \cite{Cowling_quadrotor_paper}:
\begin{equation}
    \dot x = Ax + Bu + Gv.
\end{equation}
The state
\begin{equation}
    x = \tr{\begin{bmatrix}
    \mathrm{x} & \mathrm{y} & \mathrm{z} & \dot{\mathrm{x}} & \dot{\mathrm y} & \dot{\mathrm z} & \phi & \theta & \psi & \dot{\phi} & \dot{\theta} & \dot{\psi}
    \end{bmatrix}}
\end{equation}
is comprised of translational position of the UAV in $[\text{m}]$ with respect to a global origin, their derivatives (velocities in $\mathrm x$, $\mathrm y$, $\mathrm z$ directions) in $[\text{m}/\text{s}]$, the Eulerian angles roll ($\phi$), pitch ($\theta$), and yaw ($\psi$) in $[\text{rad}]$, and their respective derivatives (angular velocities) in $[\text{rad}/\text{s}]$. The control vector $u = \tr{\begin{bmatrix}u_1 & u_2 & u_3 & u_4 \end{bmatrix}}$ consists of the total normalized thrust in $[\text{m}/{\text{s}^2}]$ and second-order derivatives $\ddot{\phi}$, $\ddot{\theta}$, $\ddot{\psi}$ of the Eulerian angles in $[\text{rad}/{\text{s}^2}]$, respectively. Note that this system is under-actuated (there are six degrees-of-freedom but only four actuators). The system matrices $A$ and $B$ are obtained by linearizing the equations of motion of the quadrotor about a hover condition $\phi = 0$, $\theta = 0$, and $u_1 = g$ (with $g \approx 9.81$ being the gravitational constant); cf.\ \cite{Cowling_quadrotor_paper} for values of these matrices. The disturbance $v \in \Real$ (unknown to the system) is the wind being applied to the vehicle in the direction of $\dot{\mathrm{x}}$, $\dot{\mathrm y}$, and $\dot{\mathrm z}$ and is modeled as an i.i.d.\ random process uniformly distributed over the interval $[0,0.1]$ measured in $[\text{m}/\text{s}]$. Thus the matrix $G$ is a column vector of ones for the states $\dot{\mathrm{x}}$, $\dot{\mathrm y}$, and $\dot{\mathrm z}$ and zeros for all other states.

For safe operation of this UAV the Eulerian angles $\phi$ and $\theta$ and the speed $V$ are bounded such that $V := \norm{\begin{bmatrix} \dot{\mathrm{x}} & \dot{\mathrm y} & \dot{\mathrm z} \end{bmatrix}}_2 \leq 5$ and $\phi,\theta \in [-\frac{\pi}{2}, \frac{\pi}{2}]$. The angular velocities are assumed to be constrained to $\dot{\phi},\dot{\theta},\dot{\psi} \in [-3,3]$ (cf.\ \cite{Cowling_quadrotor_paper} for further details). We therefore approximate the flight envelope $\K$ by an ellipsoid $\K_{\varepsilon}$ of maximum volume inscribed within these bounds while further assuming that the location of the UAV is to remain within a ball of radius $3$ centered at $(0,0,4)$ in $\{\mathrm{x},\mathrm{y},\mathrm{z}\}$ coordinates (i.e. the vehicle must safely fly within the range of $1$ to $7 \, \text{m}$ above the ground in $\textrm{z}$ direction). The input vector $u$ is constrained by the hyper-rectangle $\U:=[0.5, 5.4] \times [-0.5, 0.5]^3$ which we also under-approximate by a maximum volume inscribed ellipsoid $\U_{\varepsilon}$. For the horizon $\tau=2 \,\text{s}$ we require $x(t) \in \K_{\varepsilon}$ and $u(t)\in \U_{\varepsilon}$ $\forall t\in [0,\tau]$ despite the disturbance.

Suppose that to operate the UAV a Linear Quadratic Regulator (LQR) is designed that satisfies the performance functional
\begin{equation}
    J(u) := \int_0^\infty (\tr{x}Qx + \tr{u}Ru) dt
\end{equation}
subject to the system dynamics with $Q = 10^{-5} I_{12}$ and $R = \operatorname{diag}(10^{-6}, 10^8, 10^8, 10^8)$ (similar to those reported in \cite{Cowling_quadrotor_paper}). The corresponding state-feedback control law $u_\text{lqr}$ that minimizes $J(u)$ is constructed such that the trajectory of the closed-loop system moves from $x_0$ towards the steady-state point $x_{ss}=\tr{\begin{bmatrix}0 & 0  & 5 & 0 & 0 & 0 & 0 & 0 & 0  & 0  & 0  & 0\end{bmatrix}}$. 

Since the control authority is constrained by the set $\U_{\varepsilon}$, applying the above LQR control law results in saturation such that $u=\operatorname{sat}(u_\text{lqr})$ is effectively applied to the system. Here, the saturation function $\operatorname{sat}\colon \Real^4 \to \U_{\varepsilon}$ is determined by the support vector of $\U_{\varepsilon}= \E(\mu,U)$ in the direction of $u_\text{lqr}/\norm{u_\text{lqr}}$ and is defined as
\begin{equation}
    \operatorname{sat}(u_\text{lqr}) :=
    \begin{cases}
        u_\text{lqr} & \text{if \; $u_\text{lqr} \in \U_{\varepsilon}$};\\
        \mu + U u_\text{lqr} \Inner{u_\text{lqr},\, U u_\text{lqr}}^{-1/2} & \text{if \; $u_\text{lqr} \not\in \U_{\varepsilon}$}.
    \end{cases}
\end{equation}

Consider the initial condition
\begin{equation}
\begin{multlined}[t][12cm]
    x_0 = \tr{\left[\begin{matrix}-0.4032 & 0.7641 & 3.6437 & -1.2406\end{matrix} \right.\\
            \begin{matrix}0.0165 &  3.0335 &  -0.0789 &  -0.4835\end{matrix}\\
            \left. \begin{matrix}-0.3841 &   0.0375 &   0.6806 &   0.5509 \end{matrix}\right]}.
\end{multlined}
\end{equation}
Without engaging a safety controller, the (saturated) LQR results in violation of the safety constraint $\K_{\varepsilon}$ at $t=1.8\, \text{s}$; Figs.\ ~\ref{F:UAV_traj_nosafe} and ~\ref{F:UAV_control_nosafe}. In contrast, the hybrid controller presented in this paper is capable of preserving safety; Figs.\ ~\ref{F:UAV_traj_safe} and ~\ref{F:UAV_control_safe}. A piecewise ellipsoidal under-approximation of $\Disc_{[0,2]}(\K_{\varepsilon},\U_{\varepsilon})$ was precomputed in $35\, \text{min}$ using $15$ random terminal directions and a uniform time interval partitioning with $\abs{P}=200$ on an Intel-based machine with $2.8 \, \text{GHz}$ CPU and $3\,\text{GB}$ RAM running single-threaded $32$-bit \textsc{MATLAB} 7.5 and Ellipsoidal Toolbox v.1.1.3 \cite{KV06}.  The controller is configured so as to permit the same (saturated) LQR in $q_\text{perf}$, while switching to $q_\text{safe}$ when necessary. A smoothing parameter $\alpha=0.9$ is chosen in \eqref{E:Beta_intermsof_phi_alpha}, and the modified strategy \eqref{E:smooth1} is employed to prevent chattering of the control signal. Fig.\ ~\ref{F:UAV_u_percent} shows the percentage of the performance component of the input signal. The pseudo-time rate of $\dot{\sigma}=1$ was enforced in $q_\text{perf}$ so as to not sacrifice performance for prolonged safety. On the other hand, if extended safety is desired, freezing the pseudo-time in $q_\text{perf}$ results in maintaining safety for $4.43 \,\text{s}$ using the sets precomputed for only $\tau=2\, \text{s}$. See \cite[Section~6.6.3]{kaynama_thesis} for another example on safety-based control of anesthesia. We note that no current Eulerian method is capable of directly synthesizing a safety-preserving controller for such a high-dimensional system.

\begin{figure}[t]
  \centering
    \begin{minipage}[t]{0.48\linewidth}
      \centering
        \includegraphics[scale=0.6]{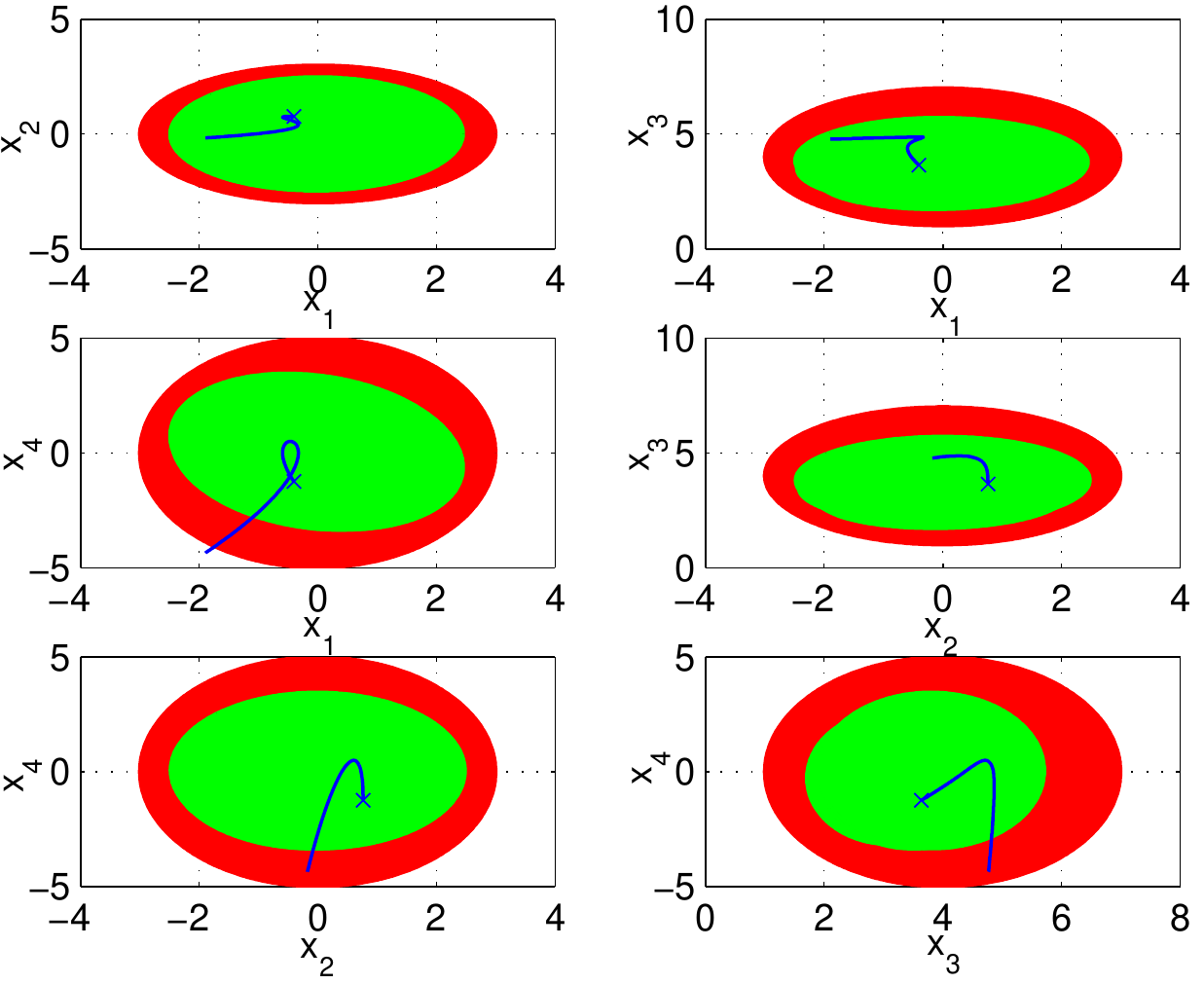}
        \caption{2D projections of the first four dimensions of the closed-loop UAV trajectories using saturated LQR with no safety control. $x_0$ is marked by `$\times$'. The constraint set $\K_{\varepsilon}$ (red/dark) and a piecewise ellipsoidal under-approximation of $\Disc_{[0,2]}(\K_{\varepsilon},\U_{\varepsilon})$ (green/light) are also shown. Safety is violated due to unaccounted for actuator saturations.}
        \label{F:UAV_traj_nosafe}
    \end{minipage}\hfill
    \begin{minipage}[t]{0.48\linewidth}
      \centering
        \includegraphics[scale=0.5]{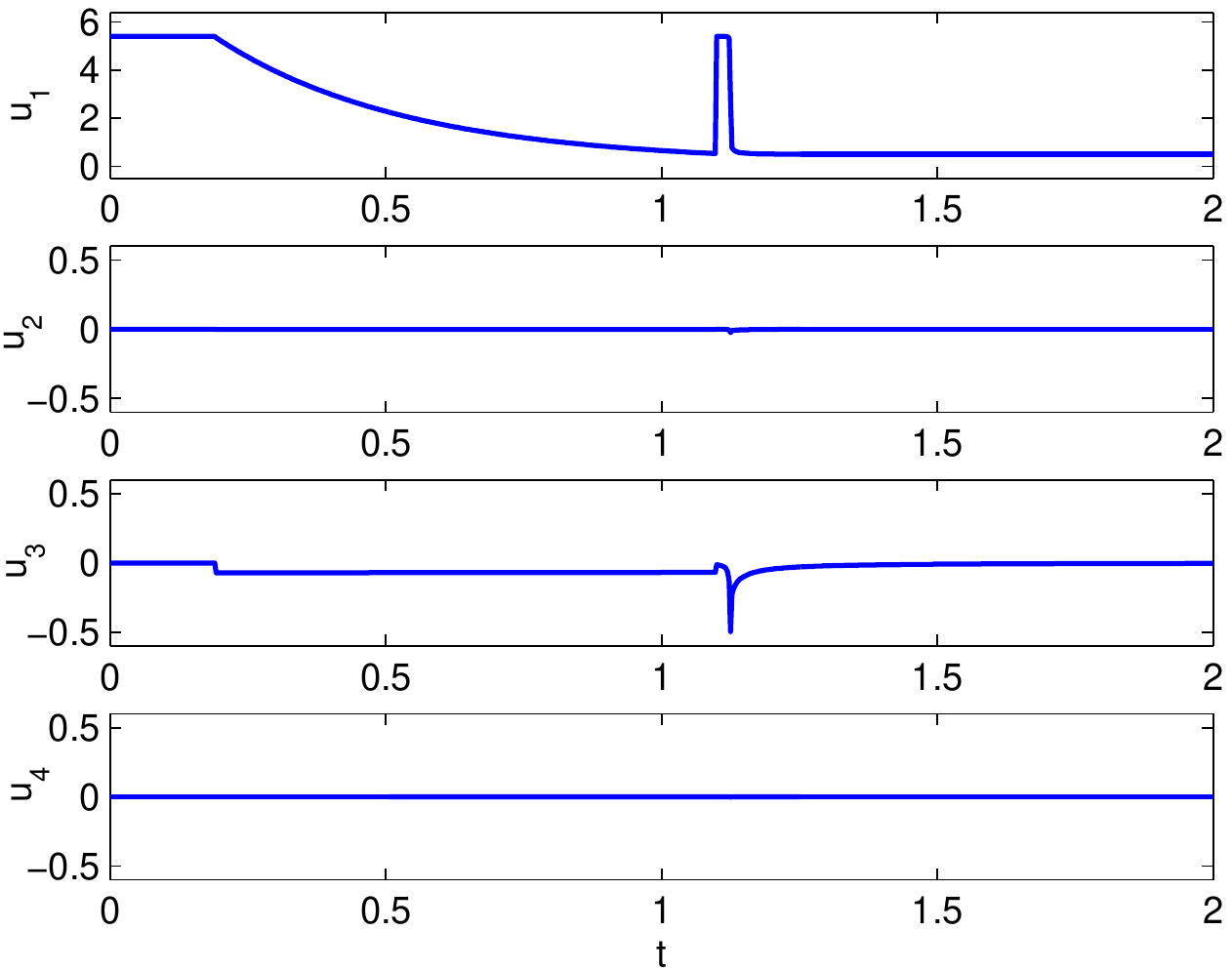}
        \caption{Control signals corresponding to Fig.~\ref{F:UAV_traj_nosafe} for the UAV (saturated LQR; no safety)}
        \label{F:UAV_control_nosafe}
    \end{minipage}
\end{figure}

\begin{figure}[t]
  \centering
    \begin{minipage}[t]{0.48\linewidth}
      \centering
        \includegraphics[scale=0.6]{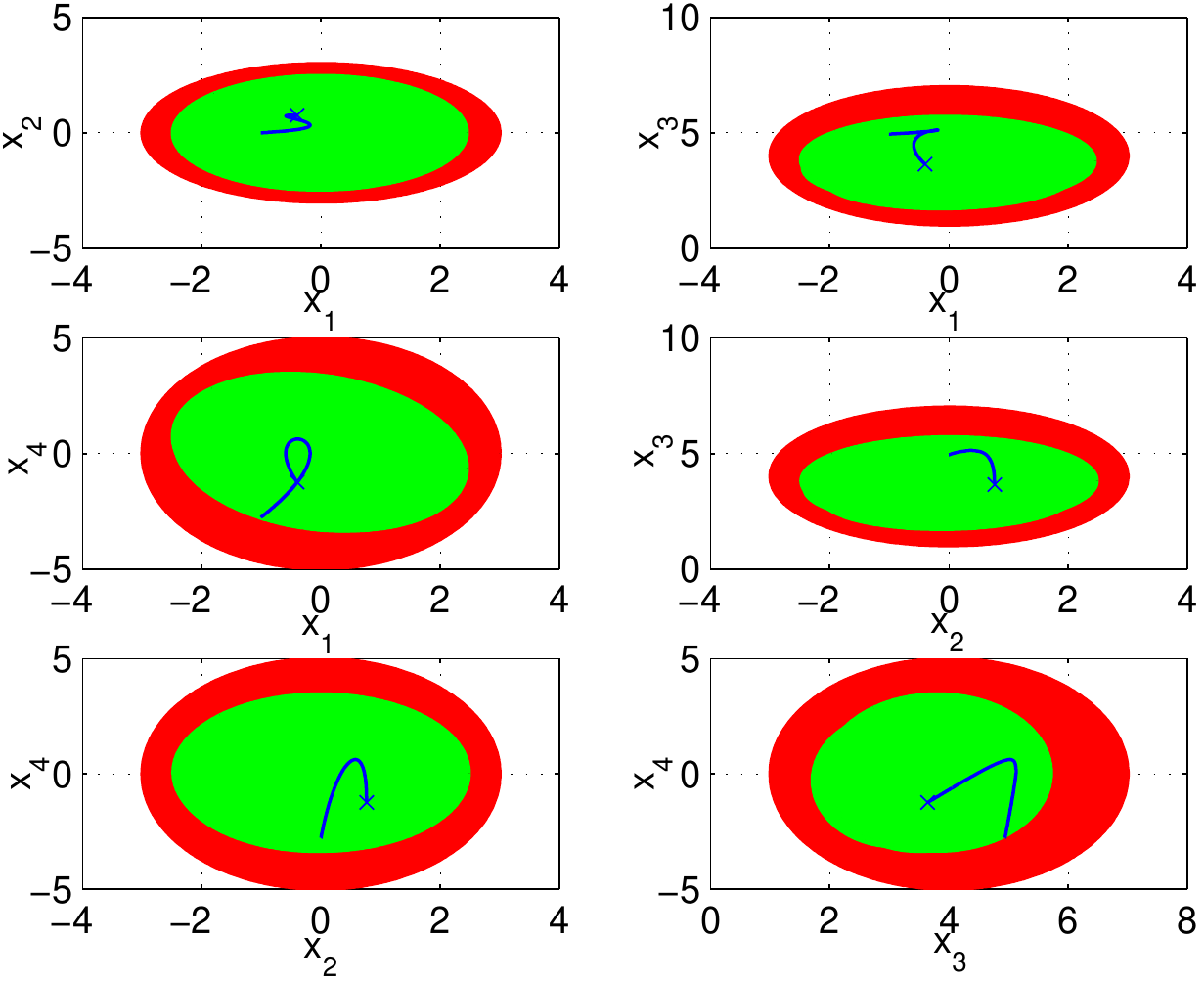}
        \caption{2D projections of the first four dimensions of the closed-loop UAV trajectories using safety-preserving control strategies. Safety is maintained over the desired horizon in spite of the actuator saturations in the LQR controller and the unknown disturbance.}
        \label{F:UAV_traj_safe}
    \end{minipage}\hfill
    \begin{minipage}[t]{0.48\linewidth}
      \centering
        \includegraphics[scale=0.5]{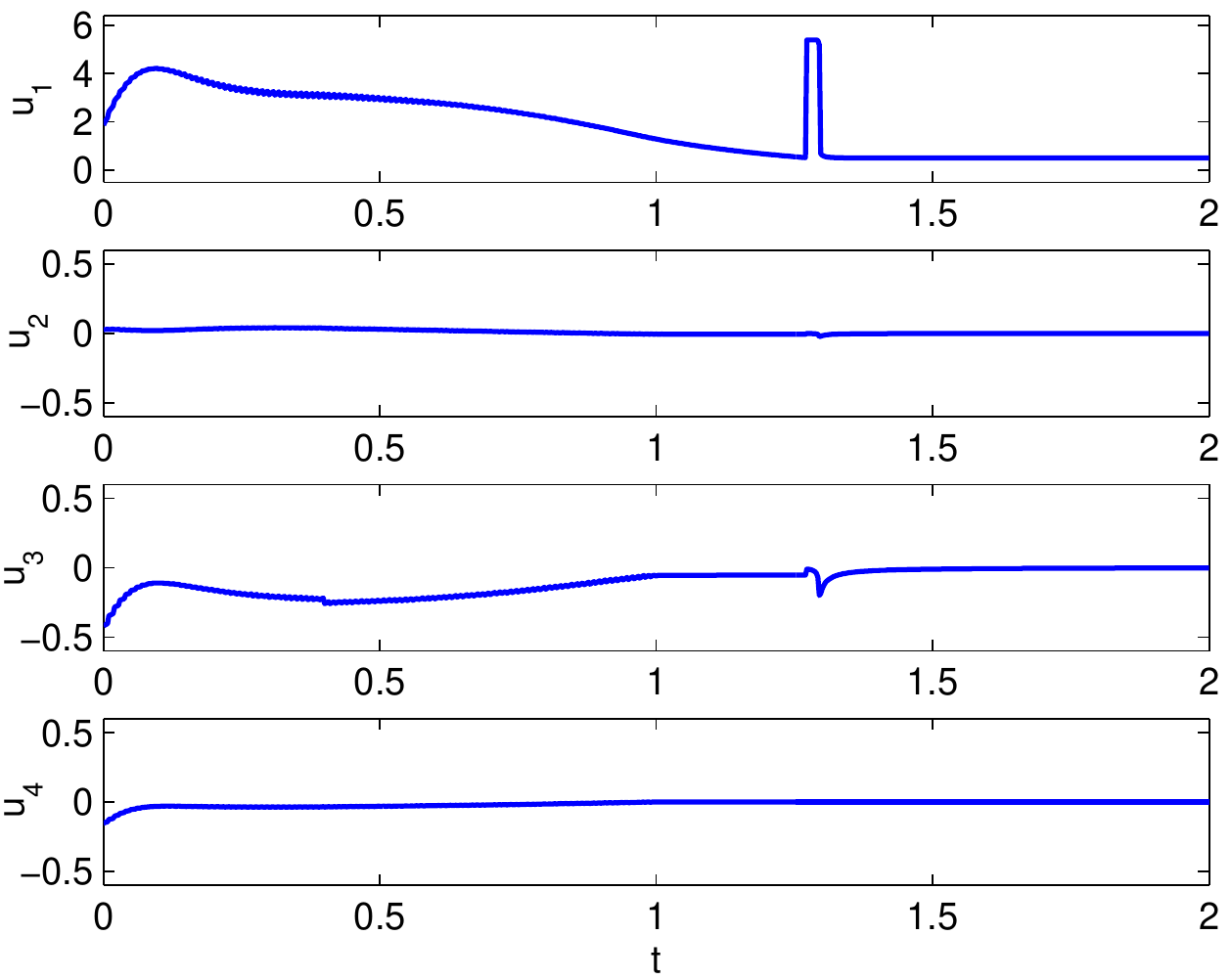}
        \caption{Safety-preserving control signals corresponding to Fig.~\ref{F:UAV_traj_safe} for the UAV. The modified policy \eqref{E:smooth1} prevents chattering to a great extent.}
        \label{F:UAV_control_safe}
    \end{minipage}
\end{figure}

\begin{figure}[t]
      \centering
        \includegraphics[scale=0.5]{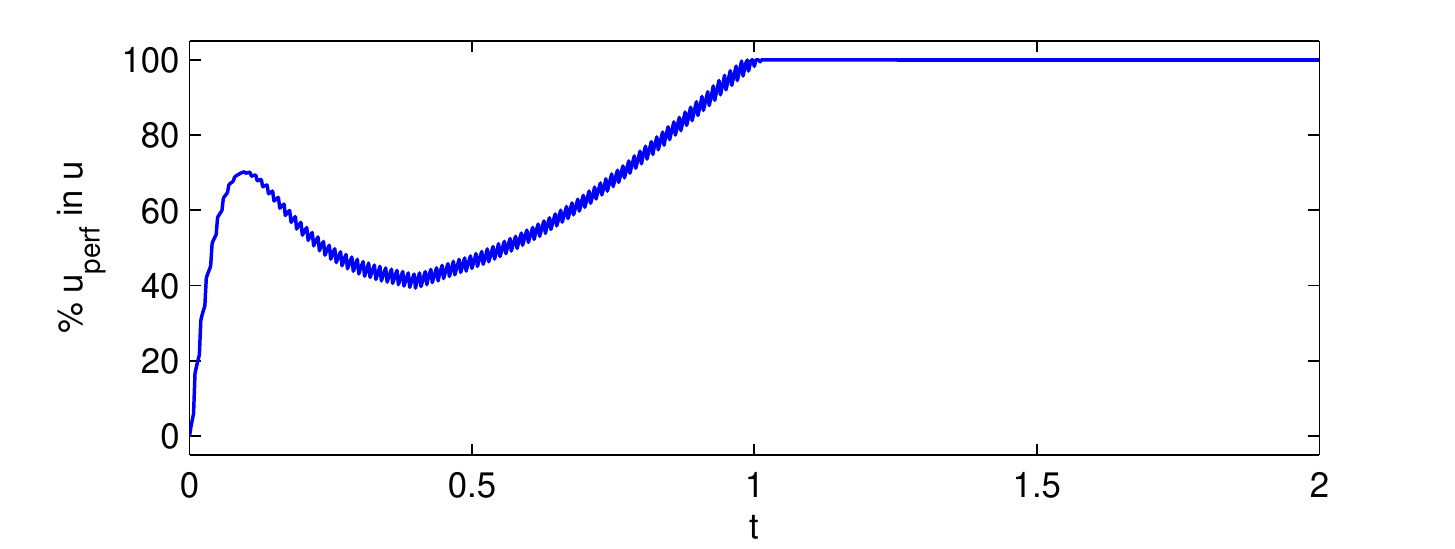}
        \caption{Percentage of $u_\text{perf}(t)$ in the vector $u(t)$ when using the modified policy \eqref{E:smooth1} with $\alpha=0.9$ for the UAV. The initial condition $x_0$ is close to the switching surface (unrecognizable from the 2D projection plots). A value of $0\%$ would correspond to $u(t)= u_\text{safe}(t)$, while a value of $100\%$ would correspond to when $x(t)$ is within the $\alpha$ sub-level set of the domain of $q_\text{perf}$ and thus $u(t)=u_\text{perf}(t)$. Increasing $\alpha$ would further emphasize the $u_\text{perf}$ component, at the cost of a more aggressive control policy.}
        \label{F:UAV_u_percent}
\end{figure}

\section{Conclusions \& Future Work}\label{S:conc}

We presented a scalable set-valued robust safety-preserving feedback controller, applicable to high-dimensional LTI systems. To this end, we extended our previous results on scalable approximation of the viability kernel to the case where a constrained continuous-time system is subject to bounded additive uncertainty/disturbance and therefore the discriminating kernel is to be computed. The controller is formulated as a hybrid automaton that incorporates a pseudo-time variable when prolonged (potentially infinite-horizon) safety using finite-horizon computations is desired. A heuristic method was presented that reduces chattering to a great extent and results in a control signal that is continuous across transitions of the automaton. The results were demonstrated on a twelve-dimensional model of a quadrotor where safety (flight envelope) was maintained despite actuator saturations and unknown disturbances.

Representing the discriminating kernel in terms of robust maximal reachable sets has the advantage of employing Lagrangian methods for its approximation. This paves the way for potentially more scalable computation of the kernel than with grid-based techniques even under nonlinear dynamics (should an appropriate Lagrangian tool be developed in the future). Since the presented safety-preserving controller in this paper is constructed based on ellipsoidal techniques and semidefinite programmes, its computational complexity in the offline phase---its most costly aspect---is cubic in the dimension of the state (and linear in the number of time partitions and the number of terminal directions). Thus, it can be employed for treatment of systems with an order of magnitude higher dimension that is currently possible with grid-based techniques. Extension of this work to the discrete-time case appears to be straightforward. We have investigated a sampled-data extension in \cite{kaynama_NAHS2012}. We are currently studying the advantages and disadvantages of our scalable technique as compared to more conventional approaches such as MPC or Hamilton-Jacobi formulations.



\appendix \label{appendix}

\section{Proofs of Propositions~\ref{P:continuous_subset_Disc} and \ref{P:precision_Disc}}\label{App}

The proofs presented here are straightforward extensions of our previous results in \cite{kaynama_HSCC2012}.

\begin{proof}[Proof (Proposition~\ref{P:continuous_subset_Disc})]
    Fix a partition $P$ of $[0,\tau]$ and take a point $x_0 \in K_0(P)$. By construction of $K_0(P)$, this means that for each $k \in \{ 1, \ldots, \abs{P}\}$ there is some point $x_k \in K_k(P)$ and, for every disturbance $v_k \in \Vl^\NA_{[0, t_k-t_{k-1}]}$, a control $u_k \in \Ul^\CL_{[0, t_k-t_{k-1}]}$ such that $x_k$ can be reached from $x_{k-1}$ at time $t=t_k-t_{k-1}$ using $u_k$. Thus, taking the concatenation of the inputs $u_k$ and $v_k$, we get for every disturbance $v \in \Vl^\NA_{[0,\tau]}$ a control $u \in \Ul^\CL_{[0,\tau]}$ such that the solution $x\colon[0,\tau] \to \X$ to the initial value problem $\dot{x}=f(x,u,v)$, $x(0)=x_0$, satisfies $x(t_k) = x_k \in K_k(P) \subseteq \{x \in \K \mid \dist(x, \K^c) \geq M \norm{P}\} $. We claim that this guarantees that $x(t) \in \K$ $\forall t \in [0,\tau]$. Indeed, any $t \in [0,\tau)$ lies in some interval $[t_k,t_{k+1})$. Since $f$ is bounded by $M$, we have
    \begin{equation}\label{E:max_dist_travel}
    \begin{split}
      \norm{x(t)-x(t_k)} &\leq \int_{t_k}^{t} \norm{\dot{x}(s)} ds \leq M(t-t_k) < M(t_{k+1} - t_k) \leq M \norm{P}.
    \end{split}
    \end{equation}
    Further, $x(t_k) \in K_k(P)$ implies $\dist(x(t_k),\K^c) \geq M\norm{P}$. Combining these, we see that
    \begin{equation}
        \begin{split}
            \dist(x(t),\K^c) &\geq \dist(x(t_k),\K^c) -  \norm{x(t)-x(t_k)} > M \norm{P} -M \norm{P} = 0
        \end{split}
    \end{equation}
    and hence $x(t) \in \K$. Thus, $x_0 \in \Disc_{[0,\tau]}(\K,\U,\V)$. 
\end{proof}

\begin{proof}[Proof (Proposition~\ref{P:precision_Disc})]
    The second inclusion in \eqref{E:precision_Disc} follows directly from Proposition~\ref{P:continuous_subset_Disc}. To prove the first inclusion, take a state $x_0 \in \Disc_{[0,\tau]}{(\inter{\K},\U,\V)}$. For every disturbance $v\in \Vl^\NA_{[0,\tau]}$ there exists a control $u \in \Ul^\CL_{[0,\tau]}$ such that the trajectory $x(\cdot)$ satisfies $x(t)\in \inter{\K}$ $\forall t \in [0,\tau]$. Since $\inter{\K}$ is open, $\forall x \in \inter{\K}$ \, $\dist(x,\K^c) > 0$. Further, $x \colon [0,\tau] \to \X$ is continuous so the function $t \mapsto \dist(x(t), \K^c)$ is continuous on the compact set $[0,\tau]$. Thus, we can define $d>0$ to be its minimum value. Take a partition $P$ of $[0,\tau]$ such that $\norm{P} < d/M$. We need to show that $x_0 \in K_0(P)$.

    First note that our partition $P$ is chosen such that $\dist(x(t),\K^c) > M \norm{P}$ $\forall t \in [0,\tau]$. Hence $x(t_k) \in K_{\abs{P}} (P)$ for all $k = 0, \ldots, \abs{P}$. To show that $x(t_{k-1}) \in \Reach^\sharp_{t_{k} - t_{k-1}} (K_k (P),\U,\V)$ for all $k = 1, \ldots,\abs{P}$, consider the tokenizations $\{u_k\}$ and $\{v_k\}$ of the control $u$ and the disturbance $v$, respectively, corresponding to $P$. It is easy to verify that for all $k$ we can reach $x(t_k)$ from $x(t_{k-1})$ at time $t_k-t_{k-1}$ for every $v_k$ using the control input $u_k$. Thus, in particular, we have $x_0 = x(t_0) \in \Reach^\sharp_{t_{1} - t_{0}} (K_1 (P),\U,\V)$. So $x_0 \in K_0(P)$. Hence $\Disc_{[0,\tau]}(\inter{\K},\U,\V) \subseteq \bigcup_{P \in \mathscr{P}([0,\tau])} K_0(P)$.
\end{proof}

\section*{Acknowledgment}

The first author thanks Dr.\ Alex A.\ Kurzhanskiy for helpful discussions on the synthesis of maximal reachability control laws using Ellipsoidal Toolbox.

\bibliographystyle{IEEEtran}
\bibliography{IEEEfull,viabET_v2,anesthesia_related}

\end{document}